\definecolor{orcidlogocol}{HTML}{A6CE39}
\def\1{\bm{1}}
\DeclareMathAlphabet{\mathsfit}{\encodingdefault}{\sfdefault}{m}{sl}
\SetMathAlphabet{\mathsfit}{bold}{\encodingdefault}{\sfdefault}{bx}{n}
\newcommand{\R}{\mathbb{R}}
\DeclareMathOperator*{\argmin}{arg\,min}
\def\xstar{x^\star}
\def\xfp{x^{(\infty)}}
\def\T{\top}
\def\ie{{\em i.e.,~}}
\def\eg{{\em e.g.,~}}
\DeclareMathOperator{\prox}{prox}
\newcommand{\xhat}{\widehat{x}}
\newcommand{\x}[1]{x^{(#1)}}
\newtheorem{thm}{Theorem}
\begin{document}

\title{Deep Equilibrium Architectures for Inverse Problems in Imaging}

\author{Davis Gilton\thanks{D.\,Gilton is with the Department of Electrical and Computer Engineering at the University of Wisconsin-Madison, 1415 Engineering Dr, Madison, WI 53706 USA.}, Gregory Ongie\thanks{G.\,Ongie is with the Department of Mathematical and Statistical Sciences at Marquette University, 1250 W Wisconsin Ave, Milwaukee, WI 53233 USA.}, and Rebecca Willett\thanks{R.\,Willett is with the Departments of Computer Science and Statistics at the University of Chicago, 5747 S Ellis Ave, Chicago, IL 60637 USA.} 	\thanks{The authors gratefully acknowledge funding from NSF Awards DMS‐1925101, DMS‐2023109, and OAC‐1934637 and AFOSR FA9550‐18‐1‐0166.}
}

\maketitle

\begin{abstract}
Recent efforts on solving inverse problems in imaging via deep neural networks use architectures inspired by a fixed  number of iterations of an optimization method. The number of iterations is typically quite small due to difficulties in training networks corresponding to more iterations; the resulting solvers cannot be run for more iterations at test time without incurring significant errors. This paper describes an  alternative approach corresponding to an {\em infinite} number of iterations, yielding 
a consistent improvement
in reconstruction accuracy above state-of-the-art alternatives and where the computational budget can be selected at test time to optimize context-dependent trade-offs between accuracy and computation. The proposed approach leverages ideas from Deep Equilibrium Models, where the fixed-point iteration is constructed to incorporate a known forward model and insights from classical optimization-based reconstruction methods. 
\end{abstract}

\section{Introduction}

A collection of recent efforts surveyed in \cite{ongie2020deep} consider the problem of using training data to solve inverse problems in imaging. 
Specifically, imagine we observe a corrupted set of measurements $y$ of an image $\xstar$ under a linear measurement operator $A$ with some noise $\varepsilon$ according to
\begin{equation}
    y = A \xstar + \varepsilon.
    \label{eq:obs}
\end{equation}
Our task is to compute an estimate of $\xstar$ given measurements $y$ and knowledge of $A$. This task is particularly challenging when the inverse problem is \emph{ill-posed}, \ie when the system is underdetermined or ill-conditioned, in which case simple methods such as least squares estimation may not have a unique solution or may produce estimates that are highly sensitive to noise.

\begin{figure*}[ht!]
\renewcommand*{\arraystretch}{0}
\centering
\begin{tabular}{c@{}c@{}c@{}c@{}c@{}c@{}}
& K=1 & K=10 & K=20 & K=30 & K=40 \vspace{2pt} \\
\begin{minipage}{0.09\linewidth} {\sc DE-Prox} \\ (ours) \end{minipage} & \includegraphics[width = 0.18\linewidth, align=c]{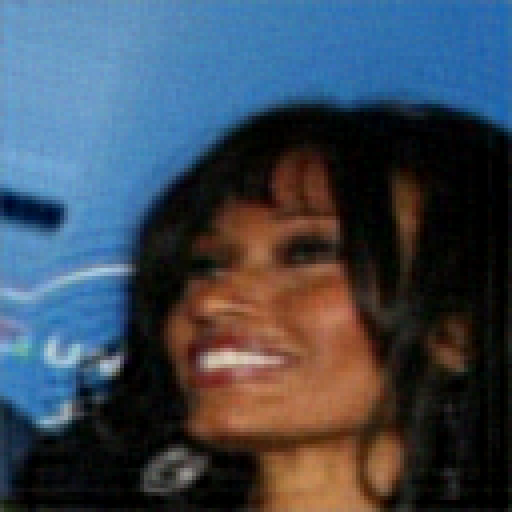} &
\includegraphics[width = 0.18\linewidth, align=c]{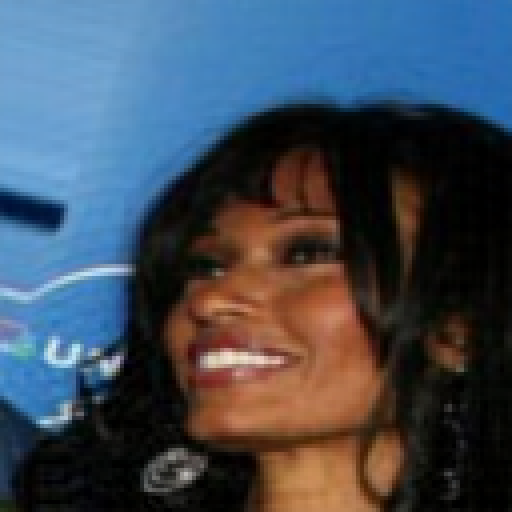} &
\includegraphics[width = 0.18\linewidth, align=c]{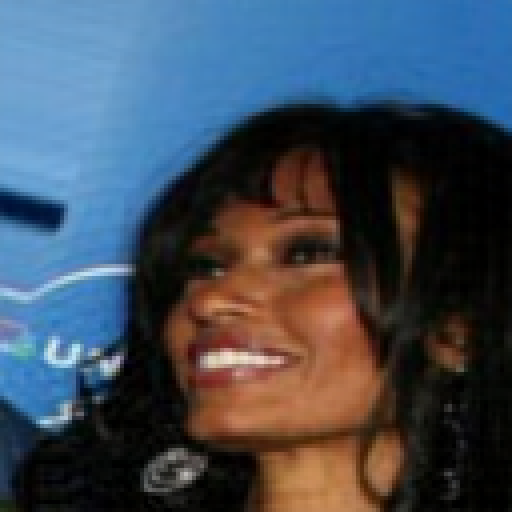} &
\includegraphics[width = 0.18\linewidth, align=c]{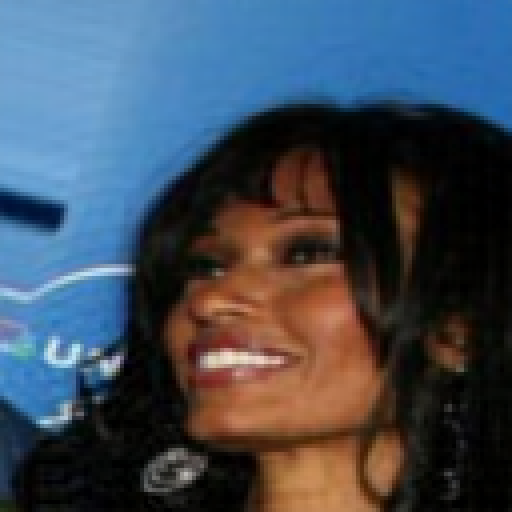} &
\includegraphics[width = 0.18\linewidth, align=c]{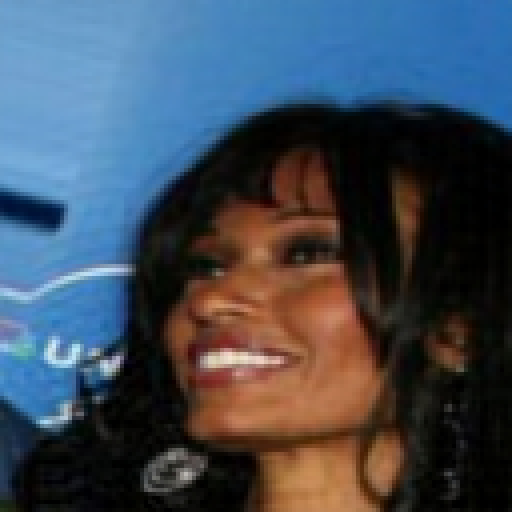} \\
\begin{minipage}{0.09\linewidth} {\sc DU-Prox}\end{minipage} & \includegraphics[width = 0.18\linewidth, align=c]{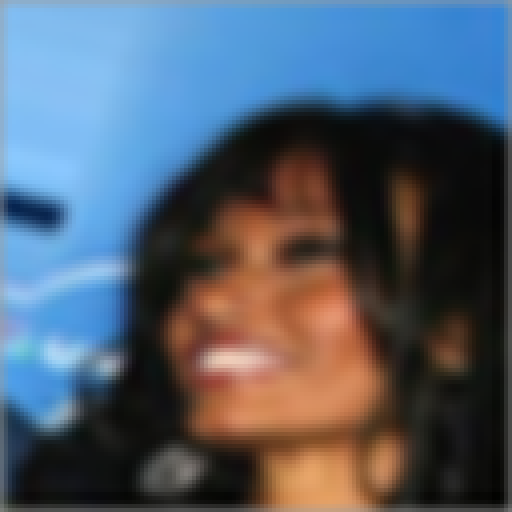} &
\includegraphics[width = 0.18\linewidth, align=c]{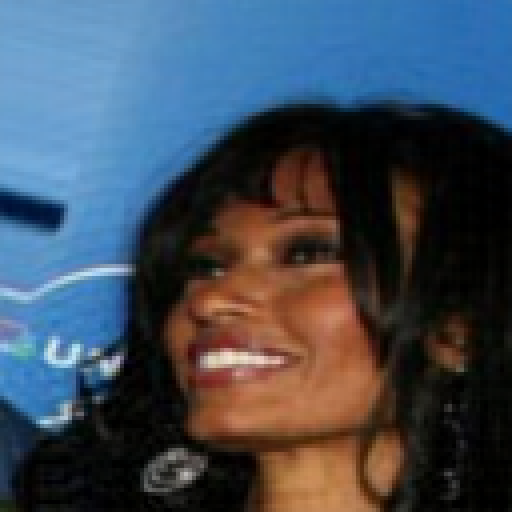} &
\includegraphics[width = 0.18\linewidth, align=c]{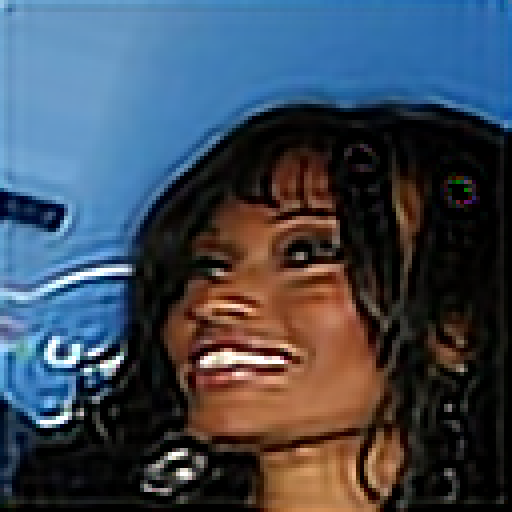} &
\includegraphics[width = 0.18\linewidth, align=c]{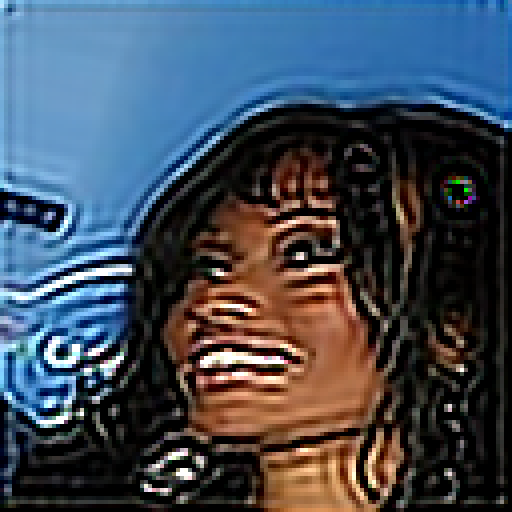} &
\includegraphics[width = 0.18\linewidth, align=c]{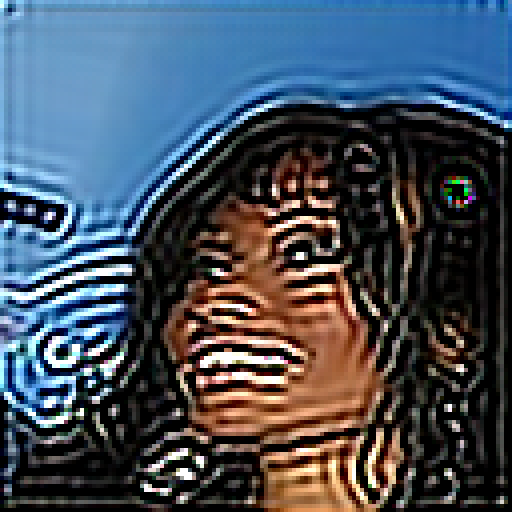} 
\end{tabular}
\caption{Deep Unrolling (DU) methods are state-of-the-art deep networks for image reconstruction that unroll iterative optimization algorithms for a fixed number of iterations $K$. As shown above in an illustrative example with Deep Unrolled Proximal Gradient Descent ({\sc DU-Prox}), these methods do not allow flexible operation at inference: unrolling for $K$ iterations where $K$ was not used at training results in severe artifacts. By utilizing Deep Equilibrium networks ({\sc DE-Prox} above), our method trains inverse solvers to return good reconstructions \emph{at convergence}, instead of at an arbitrary number of iterations, resulting in a flexible, higher-performing image reconstruction technique.}
\label{fig:debluriterates}
\end{figure*}

\begin{figure}[ht]
\begin{center}
\subfigure[MRI reconstruction.]{\includegraphics[width=.4\columnwidth]{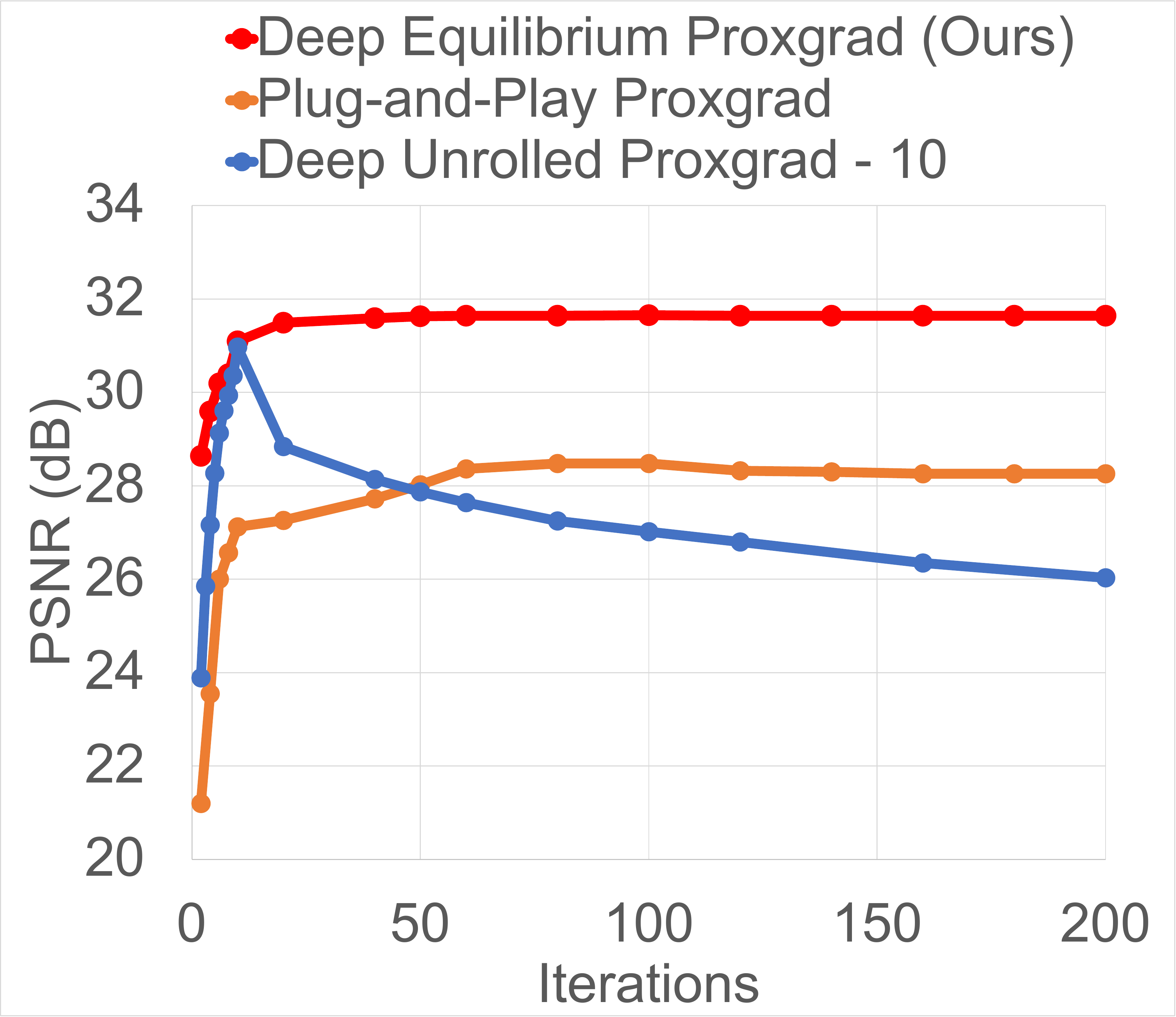} \label{fig:psnrclock}}~
\subfigure[Deep unrolling challenges]{\includegraphics[width=.4\columnwidth]{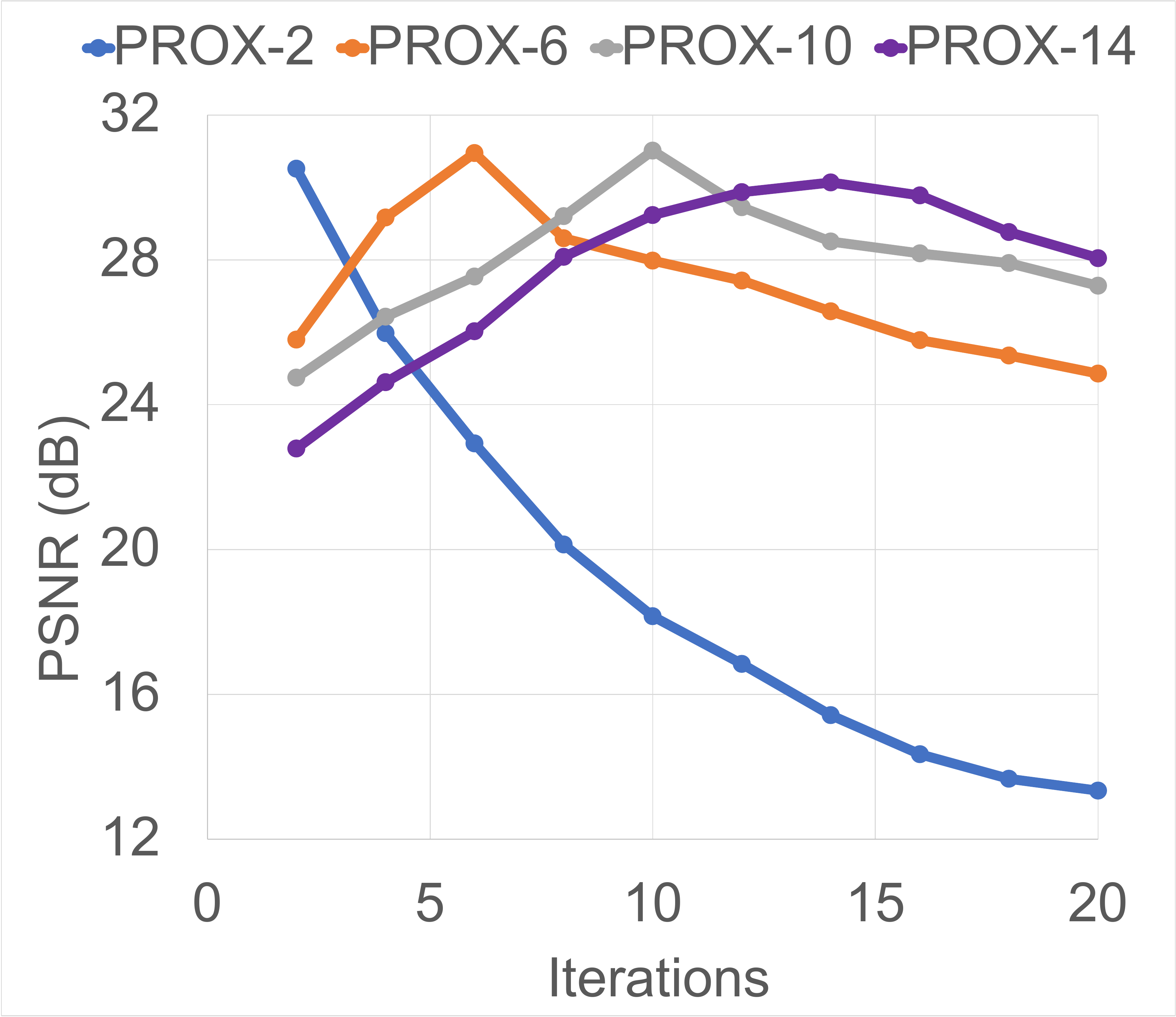}\label{fig:e2eiters}} 
\vspace{-.1in}
\caption{(a)
PSNR of reconstructed images for an MRI reconstruction problem as a function of iterations used to compute the reconstruction. Unrolled methods are optimized for a fixed computational budget during training, and running additional steps at inference yields a significant drop in performance. Our deep equilibrium methods can achieve the same PSNR for the optimal computational budget of an unrolled method, but can trade slightly more computation time for a significant increase in the PSNR, allowing a user to choose the desired computational budget and reconstruction quality. (b) Standard unrolled deep optimization networks typically require choosing some fixed number of iterates during training. Deviating from this fixed number at inference incurs a significant penalty in PSNR. The forward model here is 8x accelerated single-coil MRI reconstruction, and the unrolled algorithm is unrolled proximal gradient descent with $K$ iterates, labeled {\sc PROX-K} (Fig. \ref{fig:deprox}). For further experimental details see \cref{sec:results}.
}
\end{center}
\vskip -0.2in
\end{figure}

Decades of research has explored geometric models of image structure that can be used to regularize solutions to this inverse problem, including \cite{tikhonov1943stability,rudin1992nonlinear,dabov2007image} and many others. More recent efforts have focused instead on using large collections of training images, $\{x_i^*\}_{i=1}^n$, to {\em learn} effective regularizers.

One particularly popular and effective approach involves augmenting standard iterative inverse problem solvers with learned deep networks. This approach, which we refer to as {\bf deep unrolling (DU)}, is reviewed in \cref{sec:DU}. 
The basic idea is to build an architecture that mimics a small number of iterations of an iterative algorithm. In practice, the number of iterations is quite small (typically 5-10) because of issues stability, memory, and numerical issues arising in backpropagation. This paper sidesteps this key limitation of deep unrolling methods with a novel approach based on {\bf deep equilibrium models (DEMs)} \cite{bai2019deep}, which are designed for training arbitrarily deep networks. The result is a novel approach to training networks to solve inverse problems in imaging that {\bf yields a consistent improvement in performance above state-of-the-art alternatives} 
and where {\bf the computational budget can be selected at test time} to optimize context-dependent tradeoffs between accuracy and computation. The key empirical findings, which are detailed in \cref{sec:results}, are illustrated in Fig.~\ref{fig:psnrclock}.

\subsection{Contributions}

This paper presents a novel approach to machine learning-based methods for solving linear inverse problems in imaging. Unlike most state-of-the-art methods, which are based on unrolling a small number of iterations of an iterative reconstruction scheme (``deep unrolling''), our method is based on deep equilibrium models that correspond to a potentially infinite number of iterations. This framework yields more accurate reconstructions that the current state-of-the-art across a range of inverse problems and gives users the ability to navigate a tradeoff between reconstruction computation time and accuracy during inference. Furthermore, because our formulation is based on finding a fixed points of a operator, we can use standard acceleration techniques to speed inference computations -- something that is not possible with deep unrolling methods.
In addition, our approach 
inherits provable convergence guarantees depending on the ``base'' algorithm used to select a fixed point equation for the deep equilibrium framework. Experimental results also show that our proposed
initialization for Deep Equilibrium Models based on pre-training is superior to random initialization, and the proposed approach is more robust to noise than past methods.  
Overall, the proposed DEM approach is a unique bridge between conventional fixed-point methods in numerical analysis and learning-based techniques for inverse problems in imaging.

\section{Relationship to Prior Work}
\subsection{Review of Deep Unrolling Methods}\label{sec:DU}
\label{sec:unrolling}

Deep unrolling methods for solving inverse problems in imaging consist of a fixed number of architecturally identical ``blocks,'' which are often inspired by particular optimization algorithm. These methods represent the current state-of-the-art in MRI reconstruction, with most top submissions to the fastMRI challenge \cite{muckley2020state} being some sort of unrolled net. Deep unrolling architectures have also been successfully applied to other inverse problems in imaging, such as low-dose CT \cite{wu2019computationally}, light-field photography \cite{chun2020momentum}, and emission tomography \cite{mehranian2020model}.

We describe here a specific deep unrolling method based on the gradient descent algorithm, although many other variants exist based on alternative optimization or fixed point iteration schemes \cite{ongie2020deep}. Suppose we have a known regularization function $r$ that could be applied to an image $x$; \eg in Tikhonov regularization, $r(x) = \frac{\lambda}{2} \|x\|_2^2$ for some scalar $\lambda > 0$. Then we could compute an image estimate $\xhat$ by solving the optimization problem
\begin{equation}
\xhat = \argmin_x \frac{1}{2}\|y-Ax\|_2^2 + r(x). \label{eq:opt}
\end{equation}
If $r$ is differentiable, this can be accomplished via gradient descent. That is, we start with an initial estimate $\x{0}$ such as $\x{0} = A^\top y$ and choose a step size $\eta > 0$, such that for iteration $k = 1,2,3,\ldots$, we set
$$\x{k+1} = \x{k} + \eta A^\top(y-A\x{k}) - \eta \nabla r(\x{k}),$$
where $\nabla r$ is the gradient of the regularizer.

The basic idea behind deep unrolled methods is to fix some number of iterations $K$ (typically $K$ ranges from 5 to 10), declare that $\x{K}$ will be our estimate $\xhat$, and model $\nabla r$ with a neural network, denoted $R_\theta(x)$, whose weights $\theta$ can be learned from training data. 
For example, we may define the unrolled gradient descent estimate to be $\xhat^{(K)}(y;\theta) := x^{(K)}$ where $x^{(0)} = A^\top y$ and for $k=0,\ldots, K-1$ we have the recursive update
\begin{equation}
\x{k+1} =  \x{k} + \eta A^\top(y-A\x{k}) - \eta R_\theta(\x{k}).
\label{eq:ziter}
\end{equation}
Training attempts to minimize the cost function $\sum_{i=1}^n \|\xhat^{(K)}(y_i;\theta) -x_i^*\|_2^2$ with respect to the network weights $\theta$. This form of training is often called ``end-to-end''; that is, we do not train the network $R_\theta$ that replaces $\nabla r$ in isolation, but rather on the quality of the resulting estimate $\xhat^{(K)}$, which depends on the forward model $A$. Above we assume that all instances of $R_\theta$ have identical weights $\theta$, although other works explore variants where the $R_\theta$ has iteration dependent weights \cite{adler2018learned}. 

The number of iterations in deep unrolling methods is kept small for two reasons. First, at deployment, these systems are optimized to compute image estimates quickly -- a desirable property we wish to retain in developing new methods. Second, it is challenging to train deep unrolled networks for many iterations due to memory limitations of GPUs because the memory required to calculate the backpropagation updates scales linearly with the number of unrolled iterations. 

As one potential workaround, suppose we train a deep unrolled method for small number of iterations $K$ (e.g., $K=5$), then extract the learned regularizer gradient $R_\theta$ and at inference time run the iterative scheme \eqref{eq:ziter} until convergence (\ie for more iterations $K$ than used in training). Our numerical results highlight how poorly this approach performs in practice (\cref{sec:results}). Choosing a sufficiently large number of iterations $K$ (and hence the computational budget for inference) at training time is essential. As we illustrate in Fig. \ref{fig:e2eiters}, one cannot deviate from the choice of $K$ used in training and expect good performance.

\subsection{Review of Deep Equilibrium Models}
\label{sec:DEM}
In \cite{bai2019deep}, the authors propose a method for training arbitrarily-deep networks given by the repeated application of a single layer. More precisely, consider an $L$-layer network with input $y$ and weights $\theta$. Letting $\x{k}$ denote the output of the $k^{\rm th}$ hidden layer, we may write 
$$\x{k+1} = f^{(k)}_\theta(\x{k};y)\; \text{ for }\; k = 0,\ldots,L-1$$
where $k$ is the layer index and $f_\theta^{(k)}$ is a nonlinear transformation such as inner products followed by the application of a nonlinear activation function. Recent prior work explored forcing this transformation at each layer to be the same (i.e. {\em weight tying}), so that $f^{(k)}_\theta = f_\theta$ for all $k$ and showed that such networks still yield competitive performance \cite{dabre2019recurrent,bai2018trellis}. Under weight tying, we have the recursion 
\begin{equation}
    \x{k+1} = f_\theta(\x{k}; y) \label{eq:dem}.
\end{equation}
The limit of $\x{K}$ as $K \rightarrow \infty$, provided it exists, is a fixed point of the operator $f_\theta(\cdot,y)$. In \cite{bai2019deep} the authors show that the network weights $\theta$ can be learned with constant memory using implicit differentiation, bypassing computation and numerical stability issues associated with related techniques on large-scale problems \cite{chen2018neural,haber2017stable}. This past work focused on sequence models and time-series tasks, assuming that each $f_\theta$ was a single layer of a neural network, and did not explore the image reconstruction task that is the focus of this paper. 
Following the posting of a preprint of this paper, \cite{heaton2021feasibility}
propose  ``fixed-point networks'' using a strategy similar to ours, independently verifying the potential of this framework in image reconstruction.

\subsection{Plug-and-Play and Regularization by Denoising Methods} \label{sec:pnp}

Initiated by \cite{venkatakrishnan2013plug}, a collection of methods based on the {\em plug-and-play ({\sc PnP})} framework have been proposed, allowing denoising algorithms to be used as priors for model-based image reconstruction. The  starting point of {\sc PnP} is to write reconstructed image as the minimizer of a cost function given by a sum of a data-fit term and a regularizer as in \eqref{eq:opt}. Applying alternating directions method of multipliers (ADMM, \cite{boyd2011distributed,chan2016plug}) to this minimization problem gives a collection of update equations, one of which has the form
$$\argmin_x \frac{1}{2\sigma}\|z-x\|_2^2 + r(x),$$
where $r(x)$ is the regularizer and $\sigma>0$ is a parameter; this update can be considered as a ``denoising'' of the image $z$. {\sc PnP} methods replace this explicit optimization step with a ``plugged-in'' denoising method. Notably, some state-of-the-art denoisers (e.g., BM3D \cite{dabov2007image} and U-nets \cite{ronneberger2015u}) do not have an explicit $r$ associated with them, but nevertheless empirically work well within the {\sc PnP} framework. A related framework called {\em Regularization by Denoising (RED)} \cite{romano2017little} is based on a similar philosophy as {\sc PnP}, but instead
considers an explicit regularizer of the form
$$r(x) = x^\top(x-\rho(x)),$$
where $\rho(x)$ corresponds to an image denoising function.

Recent {\sc PnP} and RED efforts focuses on using training data to {\em learn} denoisers \cite{meinhardt2017learning,ryu2019plug,zhang2017learning,tirer2019super,liu2020rare}. In contrast to the unrolling methods described in \cref{sec:unrolling}, these methods are {\em not} trained end-to-end; rather, the denoising module is trained independent of the inverse problem at hand (\ie independent of the forward model $A$). As described by \cite{ongie2020deep}, decoupling the training of the learned component from $A$ results in a reconstruction system that is flexible and does not need to be re-trained for each new $A$, but can require substantially more training samples to achieve the reconstruction accuracy of a method trained end-to-end for a specific $A$.

\section{Proposed Approach}
Our approach is to design an iteration map $f_\theta(\cdot\, ; y)$ so that a fixed-point $\xfp$ satisfying
\begin{equation}
    \xfp = f_\theta(\xfp;y)
\end{equation}
is a good estimate of the image $\xstar$ given its measurements $y$. 

Here we describe choices of $f_\theta$ (and hence of the implicit infinite-depth neural network architecture) that explicitly account for the forward model $A$ and generally for the inverse problem at hand. 
Specifically, we propose choosing $f_\theta$ based on different optimization algorithms applied to regularized least squares problem \eqref{eq:opt}. This approach is similar to a DU approach (see \ref{sec:DU}), but where the number of iterations is effectively infinite -- a paradigm that has been beyond the reach of all previous deep unrolling architectures for solving inverse problems in imaging.
Below we consider three specific choices of $f_\theta$, but we note that many other options are possible. 

\begin{figure}[ht!]
    \centering
\includegraphics[width=.5\linewidth]{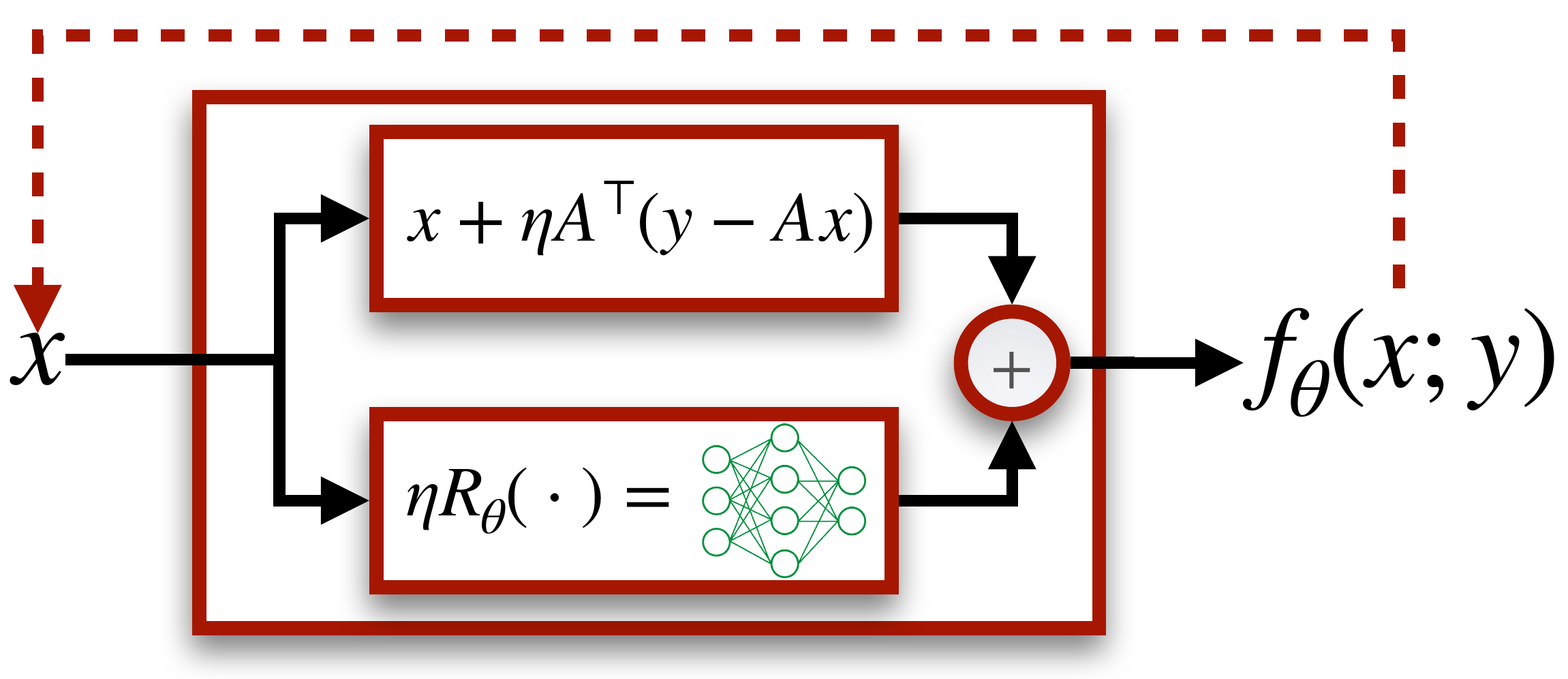}
\vspace{-1em}
    \caption{Deep Equilibrium Gradient Descent ({\sc DE-Grad)}}
    \label{fig:degrad}
\end{figure}

\subsection{Deep Equilibrium Gradient Descent ({\sc DE-Grad})}
Connecting the unrolled gradient descent iterations in \eqref{eq:ziter} 
with the deep equilibrium model in  \eqref{eq:dem}, we let
\begin{equation}
    f_\theta(x;y) = x + \eta A^\top(y-Ax) - \eta R_\theta(x).
    \label{eq:est}
\end{equation}
Recall that in this setting $R_\theta$ is a trainable network that replaces the gradient of the regularizer. 
See Figure \ref{fig:degrad} for a block diagram illustrating this choice of $f_\theta$. 


\begin{figure}[h!]
    \centering
\includegraphics[width=.7\columnwidth]{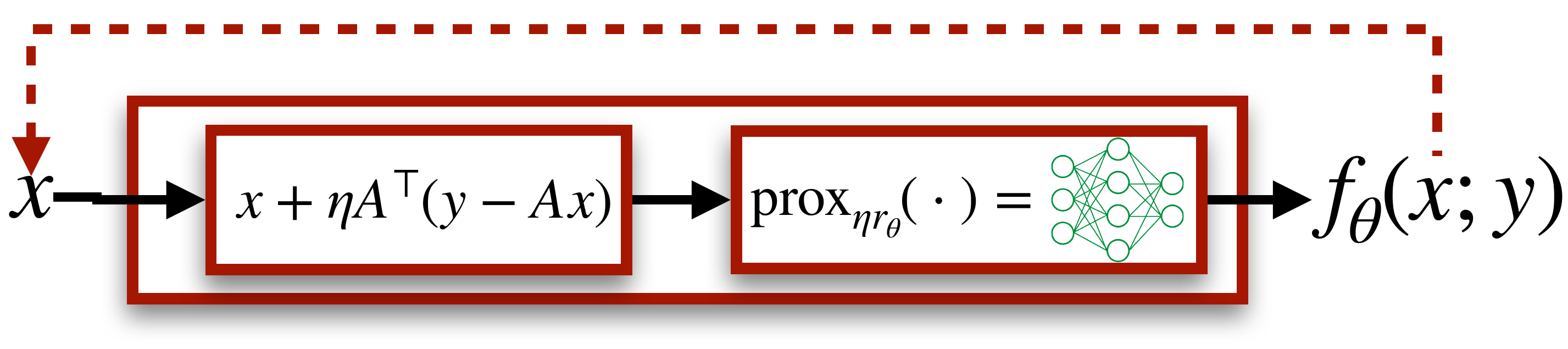}
\vspace{-1em}
    \caption{Deep Equilibrium Proximal Gradient Descent ({\sc DE-Prox})}
    \label{fig:deprox}
\end{figure}

\subsection{Deep Equilibrium Proximal Gradient Descent ({\sc DE-Prox})}

Proximal gradient methods \cite{parikh2014proximal} use a {\em proximal operator} associated with a function $h$:
\begin{equation}
    \prox_h(x) = \argmin_u \frac{1}{2}\|u-x\|_2^2 + h(u).
\end{equation}
Specifically, the proximal gradient descent algorithm applied to the optimization problem in \eqref{eq:opt}
yields the iterates
 $$\x{k+1} = \prox_{\eta r}(\x{k} + \eta A^\top (y-A\x{k})),$$ where $\eta > 0$ is a step size.
Similar to the deep unrolling approach of \cite{mardani2018neural}, we consider replacing $\prox_{\eta r}$ with a trainable network $R_\theta:\R^n\rightarrow\R^n$, which gives the iteration map
\begin{equation}\label{eq:deproxitermap}
    f_\theta(x;y) = R_\theta(x + \eta A^\top (y-Ax)).
\end{equation}
See Figure \ref{fig:deprox} for a block diagram illustrating this choice of $f_\theta$.

\begin{figure}[h!]
    \centering
\includegraphics[width=0.7\linewidth]{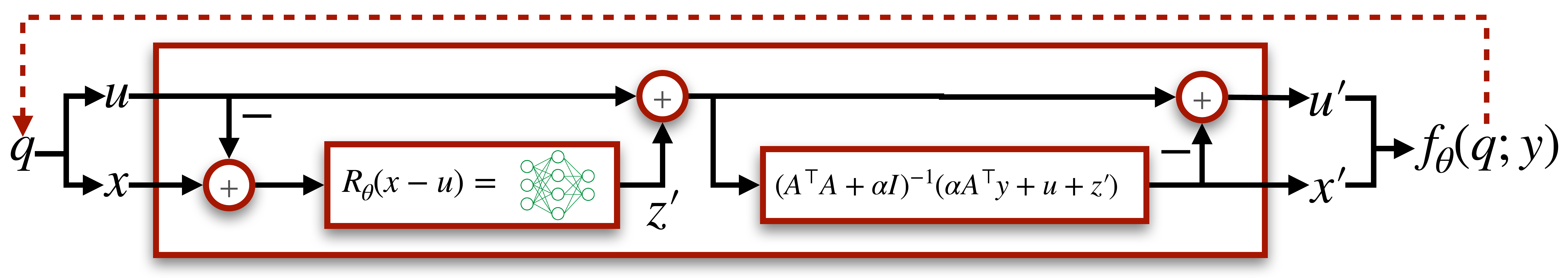}
\vspace{-1.5em}
    \caption{Deep Equilibrium Alternating Direction Method of Multipliers ({\sc DE-ADMM})}
    \label{fig:deadmm}
\end{figure}

\subsection{Deep Equilibrium Alternating Directions Method of Multipliers ({\sc DE-ADMM})}

The Alternating Directions Method of Multipliers (ADMM, \cite{boyd2011distributed}) is an efficient first-order algorithm for large-scale constrained optimization problems. ADMM can be applied to the uncontrained optimization problem \eqref{eq:opt} by rewriting it as the equivalent constrained problem
$$\min_{x,z} \frac{1}{2}\|y-Ax\|_2^2 + r(z) \text{ subject to } z = x.$$
The augmented Lagrangian (in its ``scaled form'' -- see \cite{boyd2011distributed}) associated with this problem is given by
$$
L_\alpha(x,z,u) := \frac{1}{2}\|y-Ax\|_2^2 + r(z) + \frac{1}{2\alpha}\|z-x + u\|_2^2
$$
where $u$ is an additional auxiliary variable and $\alpha>0$ is a user-defined parameter.
The ADMM iterates are then 
\begin{align}
    \begin{split}
z^{(k+1)}=&\argmin_z L_\alpha(\x{k},z,u^{(k)})\\
x^{(k+1)}=&\argmin_x L_\alpha(x,z^{(k+1)},u^{(k)})\\
u^{(k+1)}=&u^{(k)}+z^{(k+1)}-x^{(k+1)},
    \end{split}
    \label{eq:admm0}
\end{align}
Here the $z$- and $x$-updates simplify as
\begin{align*}
z^{(k+1)}=& \prox_{\alpha r}(x^{(k)}-u^{(k)})\\
x^{(k+1)}=&(I + \alpha A^\T A)^{-1}(\alpha A^\top y + z^{(k+1)} + u^{(k)}).
\end{align*}
As in the {\sc DE-Prox} approach, $\prox_{\alpha r}(\cdot)$ can be replaced with a learned network, denoted $R_\theta$. Making this replacement, and substituting $z^{(k+1)}$ directly into the expressions for $x^{(k+1)}$ and $u^{(k+1)}$ gives:
\begin{align}
\nonumber x^{(k+1)}=&(I + \alpha A^\T A)^{-1}(\alpha A^\top y + R_\theta(z^{(k)}-u^{(k)}) + u^{(k)})\\
u^{(k+1)}=&u^{(k)}+R_\theta(z^{(k)}-u^{(k)})-x^{(k+1)}.
    \label{eq:admm}
\end{align}
Note that the updates for $x^{(k+1)}$ and $u^{(k+1)}$  depend only on the previous iterates $x^{(k)}$ and $u^{(k)}$. Therefore, the above updates can be interpreted as fixed-point iterations on the joint variable $q = (x,u)$, where the iteration map $f_\theta(q;y)$ is implicitly defined as the map that satisfies 
\begin{equation}\label{eq:admmitermap}
q^{(k+1)} = f_\theta(q^{(k)},y)~\text{with}~ q^{(k)} := (z^{(k)}, u^{(k)}).
\end{equation}
Here we take the estimated image to be $\x^{\infty}$, where  $q^{(\infty)} = (x^{(\infty)}, u^{(\infty)})$ is a fixed-point of $f_\theta(\cdot;y)$. See Figure \ref{fig:deadmm} for a block diagram illustrating this choice of $f_\theta$.

\section{Calculating forward passes and gradient updates} \label{sec:calculation}

Given a choice of iteration map $f_\theta(\cdot; y)$ defining a DEM, we confronted the following obstacles. (1) Forward calculation: given an observation $y$ and network weights $\theta$, we need to be able to compute a fixed point of $f_\theta(\cdot ;y)$ {\em efficiently}. (2) Training: given a collection of training samples $\{\xstar_i\}_{i=1}^n$, we need to find the optimal network weights $\theta$. 

\subsection{Calculating Fixed-Points}

Both training and inference in a DEM require calculating a fixed point of the iteration map $f_\theta(\cdot;y)$ given some initial point $y$.
The most straightforward approach is to use fixed-point iterations given in \eqref{eq:dem}. Convergence of this scheme for specific $f_\theta$ designs is discussed in Section \ref{sec:convergence}. 

However, fixed-point iterations may not converge quickly. By viewing unrolled deep networks as fixed-point iterations, we inherit the ability to accelerate inference with standard fixed-point accelerators. To our knowledge, this work is the first time iterative inversion methods incorporating deep networks have been accelerated using fixed-point accelerators.

\paragraph{Anderson Acceleration}
Anderson acceleration
\cite{walker2011anderson}\footnote{Anderson acceleration for Deep Equilibrium models was introduced in a NeurIPS tutorial by \cite{kolter2020tutorial}.}
utilizes past iterates to identify promising directions to move during the iterations. This takes the form of identifying a vector $\alpha^{(k)} \in \R^m$ and setting, for $\beta > 0$
\begin{equation*}
    x^{(k+1)} = (1-\beta) \sum_{i=0}^{m-1} \alpha^{(k)}_i x^{(k-i)} + \beta \sum_{i=0}^{m-1} \alpha^{(k)}_i f_\theta (x^{(k-i)}; y).
\end{equation*}
The vector $\alpha^{(k)}$ is the solution to the optimization problem:
\begin{equation}
\argmin_\alpha ||G\alpha||_2^2, \ \ \ \textrm{s.t.}  \ \ \ {\bm 1}^\top \alpha = 1 \label{eq:alpha}
\end{equation}
where $G$ is a matrix whose $i^{\rm th}$ column is the (vectorized) residual $f_\theta(x^{(k-i)}; y) - x^{(k-i)}$, with $i=0,...,m-1$. The optimization problem in  \eqref{eq:alpha} admits a least-squares solution, adding negligible computational overhead when $m$ is small (\eg $m=5$).

In Section \ref{sec:results} we compare the performance and time characteristics of solving using Anderson acceleration with standard fixed-point iterations, as well as a technique which uses Broyden's method (a quasi-Newton algorithm) to find fixed points, as proposed in \cite{bai2019deep}.

An important practical consideration is that accelerating fixed-point iterations arising from optimization algorithms with auxiliary variables (like ADMM) is non-trivial. Our implementation of {\sc DE-ADMM} accelerates ADMM using the results of \cite{zhang2019accelerating}. However, in general acceleration is not required to learn to solve inverse problems, and for other algorithms or settings standard fixed-point iterations may be attractive for their simplicity of implementation. 

\subsection{Gradient Calculation} \label{sec:gradientcalc}

In this section, we provide a brief overview of the training procedure used to train all networks in Section \ref{sec:results}. We use stochastic gradient descent to find network parameters $\theta$ that (locally) minimize a cost function of the form $\frac{1}{n}\sum_{i=1}^n \ell(\xfp(y_i;\theta),x_i^*)$
where $\ell(\cdot,\cdot)$ is a given loss function, $x_i^*$ is the $i$th training image with paired measurements $y_i$, and $\xfp(y_i;\theta)$ denotes the reconstructed image given as the fixed-point of $f_\theta(\cdot\,; y_i)$. For our image reconstruction experiments, we use the mean-squared error (MSE) loss:
\begin{equation}
    \ell(x, \xstar) = \frac{1}{2} ||x - \xstar||_2^2.
\end{equation}

To simplify the calculations below, we consider gradients of the cost function with respect to a single training measurement/image pair, which we denote $(y,x^*)$.
Following \cite{bai2019deep}, we leverage the fact that $\xfp := \xfp(y;\theta)$ is a fixed-point of $f_\theta(\cdot;y)$ to find the gradient of the loss with respect to the network parameters $\theta$ without backpropagating through an arbitrarily-large number of fixed-point iterations. We summarize this approach below.

First, abbreviating $\ell(\xfp,\xstar)$ by $\ell$, then by the chain rule the gradient of $\ell$ with respect to the network parameters is given by
\begin{equation} \label{eq:genericgrad}
   \frac{\partial \ell}{\partial \theta} =  \frac{\partial \xfp}{\partial \theta}^\top \frac{\partial \ell}{\partial \xfp}.
\end{equation}
where $\frac{\partial \xfp}{\partial \theta}$ is the Jacobian of $\xfp$ with respect to $\theta$, and $\frac{\partial \ell}{\partial \xfp}$ is the gradient of $\ell$ with respect to its first argument evaluated at $\xfp$. Since we assume $\ell$ is the MSE loss, the gradient $\frac{\partial \ell}{\partial \xfp }$ is simply the residual between $\xstar$ and the equilibrium point: $\frac{\partial \ell}{\partial \xfp} = \xfp-\xstar$. 

Now, in order to compute the Jacobian $\frac{\partial \xfp}{\partial \theta}$ we start with the fixed point equation: $\xfp = f_\theta(\xfp; y)$. Differentiating both sides of this equation, and solving for $\frac{\partial \xfp}{\partial \theta}$ gives
\begin{align}
    \frac{\partial \xfp}{\partial \theta} &= \left(I - \left.\frac{\partial f_\theta(x; y)}{\partial x}\right|_{x=\xfp}\right)^{-1} \frac{\partial f_\theta(\xfp; y)}{\partial \theta}.
\end{align}
Plugging this expression into \eqref{eq:genericgrad} gives
\begin{equation} \label{eq:fullgrad}
    \frac{\partial \ell}{\partial \theta} = \frac{\partial f_\theta(\xfp; y)}{\partial \theta}^\top \left(I - \frac{\partial f_\theta(x; y)}{\partial x} \bigg\rvert_{x=\xfp} \right)^{-\top}\!\!\!\! (\xfp - \xstar) \nonumber
\end{equation}
This converts the memory-intensive task of backpropagating through many iterations of $f_\theta(\cdot\,; y)$ to the problem of calculating an inverse Jacobian-vector product. To approximate the inverse Jacobian-vector product, first we define the vector $\beta^{(\infty)}$ by
\begin{equation*}
    \beta^{(\infty)} = \left(I - \frac{\partial f_\theta(x; y)}{\partial x} \bigg\rvert_{x=\xfp} \right)^{-\top} (\xfp - \xstar).
\end{equation*}
Following \cite{kolter2020tutorial}, we note that $\beta = \beta^{(\infty)}$ is a fixed point of the equation
\begin{equation} \label{eq:backpropfixedpoint}
    \beta = \left(\frac{\partial f_\theta(x; y)}{\partial x} \bigg\rvert_{x=\xfp}\right)^\top \beta + (\xfp - \xstar),
\end{equation}
and the same machinery used to calculate the fixed point $\xfp$ may be used to calculate $\beta^{(\infty)}$. For analysis purposes, we note that the limit of fixed-point iterations for solving \eqref{eq:backpropfixedpoint} with initial iterate  $\beta^{(0)} = {0}$ is equivalent to the Neumann series:
\begin{equation} \label{eq:backpropneumann}
    \beta^{(\infty)} = \sum_{n=0}^\infty\left[ \left(\frac{\partial f_\theta(x; y)}{\partial x} \bigg\rvert_{x=\xfp}\right)^{\top}\right]^n (\xfp - \xstar).
\end{equation}
Convergence of the above Neumann series is discussed in Section \ref{sec:convergence}.

Conventional autodifferentiation tools permit quickly computing the vector-Jacobian products in \eqref{eq:backpropfixedpoint} and \eqref{eq:backpropneumann}. Once an accurate approximation to  $\beta^{(\infty)}$ is calculated, the gradient in \eqref{eq:genericgrad} is given by
\begin{equation}
    \frac{\partial \ell}{\partial \theta} = \frac{\partial f(\xfp; y)}{\partial \theta}^\top \beta^{(\infty)}.
\end{equation}
The gradient calculation process is summarized in the following steps, assuming a fixed point $\xfp$ of $f_\theta(\cdot\,; y)$ is known:
\begin{enumerate}[topsep=-2ex,itemsep=-1ex,partopsep=1ex,parsep=1ex,leftmargin=4ex]
    \item Compute the residual $r = x^\infty-x^*$.
    \item Compute an approximate fixed-point $\beta^{(\infty)}$ of the equation $\beta = \left(\frac{\partial f_\theta(x; y)}{\partial x} \big\rvert_{x=\xfp}\right)^\top \beta + r$.
    \item Compute $\frac{\partial \ell}{\partial \theta} = \frac{\partial f_\theta(\xfp; y)}{\partial \theta}^\top \beta^{(\infty)}$.
\end{enumerate}

\section{Convergence Theory} \label{sec:convergence}
Here we study convergence of the proposed deep equilibrium models to a fixed-point at inference time, \ie given the iteration map $f_\theta(\cdot\, ;y):\R^d\rightarrow\R^d$ we give conditions that guarantee the convergence of the iterates $x^{(k+1)} = f_\theta(x^{(k)};y)$ to a fixed-point $\xfp$ as $k\rightarrow\infty$.

Classical fixed-point theory ensures that the iterates converge to a unique fixed-point if the iteration map $f_\theta(\cdot; y)$ is \emph{contractive}, \ie if there exists a constant $0 \leq c < 1$ such that $\|f_\theta(x;y)-f_\theta(x';y)\| \leq c \|x-x'\|$. 
Below we give conditions on the regularization network $R_\theta:\R^d\rightarrow\R^d$ (replacing the gradient or proximal mapping of a regularizer) used in the {\sc DE-Grad}, {\sc DE-Prox} and {\sc DE-ADMM} models that that ensure the resulting iteration map is contractive and thus the fixed-point iterations for these models converge.

In particular, following \cite{ryu2019plug}, we assume that the regularization network $R_\theta$  satisfies the following condition: there exists an $\epsilon > 0$ such that for all $x,x'\in \R^d$ we have
\begin{equation}
    \|(R_\theta-I)(x)-(R_\theta-I)(x')\| \leq \epsilon \|x-x'\|
    \label{eq:lip}
\end{equation}
where $(R_\theta-I)(x):= R_\theta(x) - x$. In other words, we assume the map $R_\theta-I$ is $\epsilon$-Lipschitz. 

If we interpret $R_\theta$ as a denoising or de-artifacting network, then $R_\theta-I$ is the map that outputs the noise or artifacts present in a degraded image. In practice, often $R_\theta$ is implemented with a residual ``skip-connection'', such that $R_\theta = I + N_\theta$, where $N_\theta$ is, e.g., a deep U-net. Therefore, in this case, \eqref{eq:lip} is equivalent to assuming the trained network $N_\theta$ is $\epsilon$-Lipschitz.

First, we have the following convergence result for {\sc DE-Grad}:

\begin{thm}[Convergence of {\sc DE-Grad}] Assume that $R_\theta - I$ is $\epsilon$-Lipschitz \eqref{eq:lip}, and let $L = \lambda_{\max}(A^\T A)$ and $\mu = \lambda_{\min}(A^\T A)$, where $\lambda_{\max}(\cdot)$ and $\lambda_{\min}(\cdot)$ denote the maximum and minimum eigenvalue, respectively. If the step-size parameter $\eta > 0$ is such that $\eta < 1/(L+1)$, then the {\sc DE-Grad} iteration map $f_\theta(\cdot;y)$ defined in \eqref{eq:est} satisfies
\begin{equation*}
    \|f_\theta(x;y)-f_\theta(x';y)\| \leq \underbrace{(1-\eta(1+\mu) + \eta \epsilon)}_{=:\gamma} \|x-x'\|
    \vspace{-1em}
\end{equation*}
for all $x,x'\in \R^d$. The coefficient $\gamma$ is less than $1$ if $\epsilon < 1 + \mu$, in which case the the iterates of {\sc DE-Grad} converge.
\end{thm}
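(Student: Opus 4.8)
The plan is to estimate the Lipschitz constant of $f_\theta(\cdot;y)$ directly, exploiting the fact that the hypothesis \eqref{eq:lip} is stated for $R_\theta - I$ rather than for $R_\theta$, and then to invoke the Banach fixed-point theorem. First I would fix $x,x'\in\R^d$, set $\delta := x-x'$, and expand using \eqref{eq:est}:
\[
f_\theta(x;y)-f_\theta(x';y) = \delta - \eta A^\T A\,\delta - \eta\big(R_\theta(x)-R_\theta(x')\big).
\]
Then I would write $R_\theta = I + (R_\theta-I)$, so that $R_\theta(x)-R_\theta(x') = \delta + e$ with $e := (R_\theta-I)(x)-(R_\theta-I)(x')$ satisfying $\|e\|\le \epsilon\|\delta\|$. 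Substituting collapses the difference to
\[
f_\theta(x;y)-f_\theta(x';y) = \big(I-\eta(A^\T A+I)\big)\delta - \eta e,
\]
and the triangle inequality together with submultiplicativity of the operator norm gives $\|f_\theta(x;y)-f_\theta(x';y)\| \le \big\|I-\eta(A^\T A+I)\big\|\,\|\delta\| + \eta\epsilon\|\delta\|$.

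The remaining step is to evaluate $\big\|I-\eta(A^\T A+I)\big\|$. Since $A^\T A+I$ is symmetric with eigenvalues in $[\mu+1,\,L+1]$, the matrix $I-\eta(A^\T A+I)$ is symmetric with eigenvalues $1-\eta(1+\lambda)$, $\lambda\in[\mu,L]$. The hypothesis $\eta < 1/(L+1)$ forces the smallest of these, $1-\eta(1+L)$, to be strictly positive, so the spectral norm equals the largest eigenvalue $1-\eta(1+\mu)$. Plugging in yields exactly the claimed bound with $\gamma = 1-\eta(1+\mu)+\eta\epsilon$. Finally I would note that $\gamma \ge 1-\eta(1+L) > 0$ always, and $\gamma < 1$ iff $\eta\epsilon < \eta(1+\mu)$, i.e.\ iff $\epsilon < 1+\mu$; under this condition $f_\theta(\cdot;y)$ is a contraction on $\R^d$, so by the Banach fixed-point theorem a unique fixed point $\xfp$ exists and the iterates satisfy $\|\x{k}-\xfp\| \le \gamma^k\|\x{0}-\xfp\| \to 0$.

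This argument is a routine contraction estimate, so I do not anticipate a genuine obstacle; the only point requiring care is the role of the step-size condition, which is precisely what is needed so that $\|I-\eta(A^\T A+I)\|$ equals $1-\eta(1+\mu)$ rather than $\max\{1-\eta(1+\mu),\,\eta(1+L)-1\}$. The conceptually important ingredient is the choice to assume Lipschitz continuity of $R_\theta-I$: it lets the $-\eta\delta$ contribution reinforce the contraction (producing the favorable threshold $\epsilon < 1+\mu$), whereas assuming $R_\theta$ itself Lipschitz would instead demand the much more restrictive condition that its constant be smaller than $\mu$.
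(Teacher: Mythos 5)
Your proof is correct, and it reaches the bound by a mildly different route than the paper. The paper argues at the level of the Jacobian: it writes $\partial_x f_\theta(x;y) = (I-\eta A^\T A) - \eta\,\partial_x R_\theta(x)$, regroups this as $(1-\eta)I - \eta A^\T A - \eta\,(\partial_x R_\theta(x) - I)$, bounds the two pieces by $\max_i |(1-\eta)-\eta\lambda_i|$ and $\eta\epsilon$ respectively, and uses $\eta<1/(L+1)$ exactly as you do to place the maximum at $\mu$, concluding contractivity from $\|\partial_x f_\theta(x;y)\|<1$ for all $x$. You perform the same decomposition, but directly on the increment $f_\theta(x;y)-f_\theta(x';y)$, invoking the $\epsilon$-Lipschitz hypothesis on $R_\theta-I$ without ever differentiating; note that $(1-\eta)I-\eta A^\T A$ and your $I-\eta(A^\T A+I)$ are the same matrix, so the two estimates coincide term by term. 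What your version buys is that it avoids the smoothness of $R_\theta$ implicit in the Jacobian argument (relevant since ReLU-type networks are not everywhere differentiable), and it makes the convergence claim explicit via the Banach fixed-point theorem, with uniqueness of $\xfp$ and a geometric rate. What the paper's Jacobian formulation buys is a direct link to the training analysis: the same bound $\|\partial_x f_\theta(x;y)\|<1$ is what guarantees convergence of the Neumann series \eqref{eq:backpropneumann} used for gradient computation in Section \ref{sec:gradientcalc}. Your closing remark on the role of the step-size condition --- ruling out the $\eta(1+L)-1$ branch of the spectral norm --- is precisely the paper's reasoning as well.
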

\begin{proof}
Let $f_\theta(x;y)$ be the iteration map for {\sc DE-Grad}. The Jacobian of $f_\theta(x;y)$ with respect to $x\in\R^d$, denoted by $\partial_x f_\theta(x;y)$, is given by
\[
\partial_x f_\theta(x;y) = (I-\eta A^\T A)-\eta \partial_x R_\theta(x)  \in \R^{d\times d}
\]
where $\partial_x R_\theta(x) \in \R^{d\times d}$ is the Jacobian of $R_\theta:\R^d\rightarrow\R^d$ with respect to $x \in \R^d$. To prove $f_\theta(\cdot\,;y)$ is contractive it suffices to show $\|\partial_x f_\theta(x;y)\| < 1$ for all $x\in\R^d$ where $\|\cdot\|$ denotes the spectral norm.
Towards this end, we have
\begin{align}
    \|\partial_x f_\theta(x;y)\| & = \|(I-\eta A^\T A)-\eta \partial_x R_\theta(x)\|\nonumber\\
    & = \|\eta I +(1-\eta) I-\eta A^\T A-\eta \partial_x R_\theta(x)\|\nonumber\\
    & = \|(1-\eta) I-\eta A^\T A-\eta (\partial_x R_\theta(x)-I)\|\nonumber\\
    & \leq \|(1-\eta) I-\eta A^\T A\| + \eta\|\partial_x R_\theta(x)-I\|\nonumber\\
    & \leq \max_i |(1-\eta)-\eta\lambda_i| + \eta\epsilon\label{eq:finalineq}
\end{align}
where $\lambda_i$ denotes the $i$th eigenvalue of $A^\T A$, and in the final inequality \eqref{eq:finalineq} we used our assumption that the map $(R_\theta-I)(x):= R_\theta(x)-x$ is $\epsilon$-Lipschitz, and therefore the spectral norm of its Jacobian  $\partial_xR_\theta(x)-I$ is bounded by $\epsilon$.

Finally, by our assumption $\eta < \frac{1}{1+L}$ where $L := \max_i \lambda_1$, we have $\eta < \frac{1}{1+\lambda_i}$ for all $i$, which implies
$(1-\eta)-\eta \lambda_i > 0$ for all $i$. Therefore, the maximum in \eqref{eq:finalineq} is obtained at $\mu := \min_i \lambda_i$, which gives
\[
\|\partial_x f_\theta(x;y)\|\leq 1-\eta(1+\mu) + \eta\epsilon.
\]
This shows $f_\theta$ is $\gamma$-Lipschitz with $\gamma = 1-\eta(1+\mu) + \eta\epsilon$, proving the claim.\end{proof}

Convergence of {\sc PnP} approaches {\sc PnP-Prox} and {\sc PnP-ADMM} is studied in \cite{ryu2019plug}. At inference time, the proposed {\sc DE-Prox} and {\sc DE-ADMM} methods are equivalent to the corresponding {\sc PnP} method but with a retrained denoising network $R_\theta$. Therefore, the convergence results in \cite{ryu2019plug} apply directly to {\sc DE-Prox} and {\sc DE-ADMM}. To keep the paper self-contained, we restate these results below, specialized to the case of the quadratic data-fidelity term assumed in \eqref{eq:opt}.

\begin{thm}[Convergence of {\sc DE-Prox}] Assume that $R_\theta - I$ is $\epsilon$-Lipschitz \eqref{eq:lip}, and let $L = \lambda_{\max}(A^\T A)$ and $\mu = \lambda_{\min}(A^\T A) > 0$, where $\lambda_{\max}(\cdot)$ and $\lambda_{\min}(\cdot)$ denote the maximum and minimum eigenvalue, respectively. Then the {\sc DE-Prox} iteraion map $f_\theta(\cdot,y)$ defined in \eqref{eq:deproxitermap} is contractive if the step-size parameter $\eta$ satisfies
\begin{equation*}
\frac{1}{\mu(1+1 / \varepsilon)}<\eta<\frac{2}{L}-\frac{1}{L(1+1 / \varepsilon)}.
\end{equation*}
Such an $\eta$ exists if $\varepsilon<2 \mu /(L-\mu)$.
\end{thm}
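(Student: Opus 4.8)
The plan is to bound the spectral norm of the Jacobian of the {\sc DE-Prox} iteration map $f_\theta(x;y) = R_\theta(x + \eta A^\T(y-Ax))$ and show it is strictly less than $1$ under the stated step-size condition. Writing $z = x + \eta A^\T(y - Ax)$, the chain rule gives $\partial_x f_\theta(x;y) = \partial_z R_\theta(z)\,(I - \eta A^\T A)$. The difficulty is that, unlike in {\sc DE-Grad}, the two factors do not combine additively, so a naive submultiplicative bound $\|\partial_z R_\theta\|\cdot\|I - \eta A^\T A\|$ is too lossy: $\|\partial_z R_\theta\|$ need not be below $1$ (we only control $\partial_z R_\theta - I$), so I expect this to be the main obstacle and the reason the argument must follow \cite{ryu2019plug} more carefully rather than mimicking the {\sc DE-Grad} proof verbatim.

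The key idea I would use is the one from \cite{ryu2019plug}: decompose $R_\theta = I + (R_\theta - I)$ and exploit that $R_\theta - I$ is $\epsilon$-Lipschitz together with a careful choice of how the data-fidelity step interacts. Concretely, I would write $\partial_z R_\theta(z) = I + J$ with $\|J\| \le \epsilon$, so $\partial_x f_\theta = (I + J)(I - \eta A^\T A) = (I - \eta A^\T A) + J(I - \eta A^\T A)$. Then $\|\partial_x f_\theta\| \le \|I - \eta A^\T A\| + \epsilon\|I - \eta A^\T A\|$ is still too weak; instead I would reproduce the sharper estimate of \cite{ryu2019plug}, which treats $f_\theta$ as a composition and bounds the Lipschitz constant of the composed map by $\sqrt{\text{(something)}}$ or by a direct eigenvalue analysis of the relevant symmetric part, ultimately yielding a contraction factor that depends on $\mu$, $L$, $\eta$, and $\epsilon$. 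The step-size window $\frac{1}{\mu(1+1/\epsilon)} < \eta < \frac{2}{L} - \frac{1}{L(1+1/\epsilon)}$ is exactly the set of $\eta$ for which that contraction factor is below $1$; the lower bound forces enough ``contraction'' from the data term to absorb the possible expansion $(1+\epsilon)$ from $R_\theta$ on the low-curvature ($\mu$) modes, and the upper bound keeps the data-term map $I - \eta A^\T A$ from overshooting on the high-curvature ($L$) modes.

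The concluding steps are routine: the existence claim ``such an $\eta$ exists if $\epsilon < 2\mu/(L-\mu)$'' follows by checking that the lower endpoint is strictly less than the upper endpoint, i.e. $\frac{1}{\mu(1+1/\epsilon)} < \frac{2}{L} - \frac{1}{L(1+1/\epsilon)}$, which after clearing denominators and writing $1+1/\epsilon = (\epsilon+1)/\epsilon$ reduces to a linear inequality in $\epsilon$ equivalent to $\epsilon(L-\mu) < 2\mu$. Since the hypothesis $\mu > 0$ guarantees $A^\T A$ is invertible (so the relevant bounds are finite) and $L \ge \mu$ always, this is exactly the stated condition. I would then invoke the Banach fixed-point theorem, exactly as in the {\sc DE-Grad} case, to conclude the {\sc DE-Prox} iterates converge to a unique fixed point. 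The honest way to present this, given the paper's stated goal of being self-contained, is to cite \cite{ryu2019plug} for the core Jacobian/Lipschitz estimate and then specialize their general data-fidelity result to the quadratic term $\frac12\|y - Ax\|_2^2$, where the relevant curvature constants are simply $L = \lambda_{\max}(A^\T A)$ and $\mu = \lambda_{\min}(A^\T A)$.
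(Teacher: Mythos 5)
Your overall plan---defer the core contraction estimate to Theorem 1 of \cite{ryu2019plug}, specialize it to the quadratic data-fidelity term so that $\mu$ and $L$ are the extreme eigenvalues of $A^\T A$, and verify nonemptiness of the step-size interval by checking that the endpoint comparison reduces to $\epsilon(L-\mu)<2\mu$---is exactly what the paper does: its entire ``proof'' of this theorem is the citation ``See Theorem 1 of \cite{ryu2019plug},'' justified by the remark that at inference {\sc DE-Prox} is {\sc PnP} proximal gradient with a retrained denoiser. Your endpoint algebra is correct.

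The one substantive piece of reasoning you add, however, is off the mark: the ``naive'' composition bound you dismiss as too lossy is in fact the whole proof, and no sharper symmetric-part or ``$\sqrt{\text{(something)}}$'' analysis is needed. Write $f_\theta = R_\theta \circ T$ with $T(x) = x + \eta A^\T(y-Ax)$. The map $T$ is affine with Lipschitz constant $\|I-\eta A^\T A\| = \max\{|1-\eta\mu|,\,|1-\eta L|\}$, and since $R_\theta = I + (R_\theta - I)$ with $R_\theta - I$ being $\epsilon$-Lipschitz by \eqref{eq:lip}, the triangle inequality gives that $R_\theta$ is $(1+\epsilon)$-Lipschitz. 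Hence the Lipschitz constant of $f_\theta(\cdot;y)$ is at most
\begin{equation*}
(1+\epsilon)\,\max\{|1-\eta\mu|,\,|1-\eta L|\},
\end{equation*}
and demanding $|1-\eta\mu| < 1/(1+\epsilon)$ and $|1-\eta L| < 1/(1+\epsilon)$, together with $\epsilon/(1+\epsilon) = 1/(1+1/\epsilon)$ and $\mu \le L$, yields precisely the stated window $\tfrac{1}{\mu(1+1/\epsilon)} < \eta < \tfrac{2}{L} - \tfrac{1}{L(1+1/\epsilon)}$; comparing its endpoints gives exactly $\epsilon < 2\mu/(L-\mu)$. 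This elementary product bound is essentially Ryu et al.'s own argument for PnP-FBS (with $\nabla f$ for $f(x)=\tfrac12\|y-Ax\|_2^2$ being $\mu$-strongly monotone and $L$-Lipschitz), so your instinct that the {\sc DE-Grad}-style argument must be replaced by something cleverer caused you to leave the key estimate as an unspecified black box when it is a two-line computation; note also that, unlike in Theorem 1, the hypothesis $\mu>0$ is what makes the lower endpoint meaningful, which is why this result excludes forward models with a nontrivial nullspace.
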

See Theorem 1 of \cite{ryu2019plug}. 

\begin{thm}[Convergence of {\sc DE-ADMM}]  Assume that $R_\theta - I$ is $\epsilon$-Lipschitz \eqref{eq:lip}, and let $L = \lambda_{\max}(A^\T A)$ and $\mu = \lambda_{\min}(A^\T A) > 0$, where $\lambda_{\max}(\cdot)$ and $\lambda_{\min}(\cdot)$ denote the maximum and minimum eigenvalue, respectively. Then the iteration map $f_\theta(\cdot;y)$ for {\sc DE-ADMM} defined in \eqref{eq:admmitermap} is contractive if the ADMM step-size parameter $\alpha$ parameter satisfies
\begin{equation*}
    \frac{\varepsilon}{(1+\varepsilon-2\varepsilon^2)\mu} < \alpha.
\end{equation*}
\end{thm}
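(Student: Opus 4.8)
The cleanest route is to reduce the statement to the convergence theory for plug-and-play ADMM ({\sc PnP-ADMM}) developed in \cite{ryu2019plug}: at inference time the {\sc DE-ADMM} updates \eqref{eq:admm} coincide with {\sc PnP-ADMM} applied to the regularized least-squares problem \eqref{eq:opt}, with the trained network $R_\theta$ playing the role of the denoiser, so it suffices to show the iteration map $f_\theta(\cdot\,;y)$ of \eqref{eq:admmitermap} is a contraction on the joint variable $q$ whenever $\alpha > \frac{\epsilon}{(1+\epsilon-2\epsilon^2)\mu}$. This is Theorem~2 of \cite{ryu2019plug}, specialized to the quadratic data-fidelity term; I outline the argument below.

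First I would rewrite \eqref{eq:admm} as an honest fixed-point iteration and identify $f_\theta(\cdot\,;y)$, up to an affine change of variables, with the Douglas--Rachford-type operator $\tfrac12\big(I + (2R_\theta - I)\circ S\big)$ for the splitting $\tfrac12\|y-Ax\|_2^2 + r(x)$, where $S := 2J - I$ is the reflected resolvent of the data term and $J(v) = (I+\alpha A^\T A)^{-1}(v + \alpha A^\T y)$ is its resolvent. Two facts then drive the estimate: (i) $S$ is affine with symmetric linear part having eigenvalues $\tfrac{1-\alpha\lambda_i}{1+\alpha\lambda_i}$, where the $\lambda_i$ are the eigenvalues of $A^\T A$, all lying in $[\mu,L]$; since $\mu>0$, this linear part is a strict contraction and $J$ is $\tfrac{1}{1+\alpha\mu}$-Lipschitz; and (ii) by \eqref{eq:lip} the reflected denoiser $2R_\theta - I = I + 2(R_\theta - I)$ is $(1+2\epsilon)$-Lipschitz.

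Next I would bound $\|f_\theta(q;y) - f_\theta(q';y)\|$ by splitting $(2R_\theta - I)\circ S$ into $S$ plus a residual governed by $R_\theta - I$: the combination $\tfrac12 I + \tfrac12 S$ equals $J$ and contributes a factor $\tfrac{1}{1+\alpha\mu}$, while the residual contributes an $\epsilon$-weighted multiple of $\|S(q-q')\|$. Collecting terms yields a Lipschitz constant for $f_\theta$ that is a rational expression in $\alpha$, $\mu$, $\epsilon$ (and $L$); imposing that it be strictly below $1$ and solving for $\alpha$ gives the threshold $\alpha > \frac{\epsilon}{(1+\epsilon-2\epsilon^2)\mu}$, after which classical fixed-point theory yields convergence of $q^{(k)}$, hence of $x^{(k)}$, to the unique fixed point.

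The main obstacle is the bookkeeping in this last step: obtaining the sharp denominator $1+\epsilon-2\epsilon^2$ (rather than a cruder bound such as $1-\epsilon$) and verifying that $L$ does not in fact constrain $\alpha$ from above requires carefully tracking how the reflected-resolvent eigenvalues $\tfrac{1-\alpha\lambda_i}{1+\alpha\lambda_i}$ combine with the $(1+2\epsilon)$-Lipschitz reflected denoiser — precisely the computation in \cite{ryu2019plug}. The remaining issues are benign: \eqref{eq:admm} already expresses each update purely in terms of the previous joint iterate, so no extra argument is needed to treat the ADMM recursion as a fixed-point iteration, and the ADMM acceleration of \cite{zhang2019accelerating} used in our implementation is irrelevant to convergence of the underlying iteration.
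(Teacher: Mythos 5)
Your proposal takes essentially the same route as the paper: the paper also observes that at inference time {\sc DE-ADMM} coincides with {\sc PnP-ADMM} with the retrained network $R_\theta$ as the denoiser, and then simply invokes the contraction result of \cite{ryu2019plug} (cited there as Corollary~1) to obtain the threshold $\alpha > \frac{\epsilon}{(1+\epsilon-2\epsilon^2)\mu}$. Your Douglas--Rachford/reflected-resolvent sketch is just an outline of that cited argument, so there is no substantive difference in approach.
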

See Corollary 1 of \cite{ryu2019plug}. 
 
Unlike the convergence result for {\sc DE-Grad} given in Theorem 1, the convergence results for {\sc DE-Prox} and {\sc DE-ADMM} in Theorem 2 and Theorem 3 make the assumption that $\lambda_{min}(A^\top A)>0$, i.e., $A$ has a trivial nullspace. This is condition is satisfied for certain inverse problems, such as denoising or deblurring, but violated in many others, including compressed sensing and undersampled MRI. However, in practice we observe that the iterates of {\sc DE-Prox} and {\sc DE-ADMM} still appear to converge even in situations where $A$ has a nontrivial nullspace, indicating this assumption may be stronger than necessary.

Finally, an important practical concern when training deep equilibrium models is whether the fixed-point iterates  used to compute gradients (as detailed in Section \ref{sec:gradientcalc}) will converge. Specifically, the gradient of the loss at the training pair $(y,x^*)$ involves computing the truncated Neumann series in \eqref{eq:backpropneumann}. This series converges if the Jacobian $\partial_x f_\theta(x;y)$ has spectral norm strictly less than $1$ when evaluated at any $x \in \R^d$, which is true if and only if the iteration map $f_\theta(\cdot\,;y)$ is contractive. Therefore, the same conditions in Theorems 1-3 that ensure the iteration map is contractive also ensure that the Neumann series in \eqref{eq:backpropneumann} used to compute gradients converges.

\section{Experimental Results}

\subsection{Comparison Methods and Inverse Problems}

\begin{table*}[t]
\caption{Mean PSNR and SSIM over test set; the highest PSNRs and SSIMs for each setting are in bold.}
\label{tab:reconperformance}
\begin{small}
\begin{sc}
\begin{adjustbox}{width=\linewidth,center}
\begin{tabular}{l|c|c|c|c|c|c|c|c|c|c|c}
\toprule
 & & \multicolumn{2}{p{1.0in}|}{\centering Plug-n-Play (DnCNN denoiser)} & \multicolumn{1}{p{0.7in}|}{\centering RED (DnCNN denoiser)} & \multicolumn{4}{p{2.0in}|}{\centering Deep Unrolled Methods Trained End-to-End} & \multicolumn{3}{c}{Deep Equilibrium (Ours)} \\
  & TV & Prox & ADMM & ADMM & Grad & Prox & ADMM & Neumann & Grad & Prox & ADMM \\
\hline
Deblur (1) &&&&&&&&&&&\\
 PSNR & 26.79 & 29.77 & 29.95 & 29.78 & 32.23 & 31.64 & 31.45 & 32.39 & \bf 32.43 & 31.87 & 32.30  \\ 
 SSIM & 0.86 & 0.88 & 0.89 & 0.89 & 0.93 & 0.93 & 0.93 & 0.94 & \bf 0.94 & 0.93 & 0.94    \\ \hline
 Deblur (2) &&&&&&&&&&&\\
 PSNR & 31.31 & 35.25 & 35.61 & 35.22 & 36.10 & 36.92 & 36.14 & 36.24 & \bf 37.99 & 37.84 & 37.95  \\ 
 SSIM & 0.90  & 0.96 & 0.96 & 0.96 & 0.97 & 0.97 & 0.97 & 0.97 & \bf 0.98 & 0.97 & 0.98 \\ \hline
CS (4x) &&&&&&&&&&&\\
PSNR & 26.04 & 27.79 & 27.85 & 27.80 & 29.32 & 29.65 & 29.09 & 29.59 & 31.46 & 31.51 & \bf 31.64  \\
 SSIM & 0.83 & 0.87 & 0.87 & 0.86 & 0.88 & 0.89 & 0.88 & 0.89 & 0.93 & 0.93 & \bf 0.93  \\ \hline
 MRI (8x) &&&&&&&&&&&\\
PSNR & 26.64 & 28.45 & 28.39 & 29.89 & 31.13 & 30.97 & 31.82 & 31.64 & \bf 32.01 & 31.02 & 31.38  \\
 SSIM & 0.78 & 0.85 & 0.85 & 0.88 & 0.89 & 0.88 & 0.88 & 0.89 & \bf 0.89 & 0.88 & 0.89 \\ \hline
 MRI (4x) &&&&&&&&&&&\\
PSNR & 31.22 & 31.56 & 31.92 & 32.37 & 32.44 & 32.49 & 32.56 & 32.62 & 33.41 & 33.66 & \bf 33.72  \\
 SSIM & 0.88 & 0.89 & 0.89 & 0.90 & 0.91 & 0.91 & 0.91 & 0.92 & 0.91 & 0.92 & \bf 0.92  \\
\bottomrule
\end{tabular}
\end{adjustbox}
\end{sc}
\end{small}
\end{table*}

Our numerical experiments include comparisons with a variety of models and methods. 
Total-variation Regularized Least Squares (TV) is an important baseline that does not use any training data but rather leverages geometric models of image structure \cite{rudin1992nonlinear,strong2003edge,beck2009fast}. The {\sc PnP} and {\sc RED} methods are described in \cref{sec:pnp}; we consider both the original ADMM variant of \cite{venkatakrishnan2013plug} {\sc PnP-ADMM} and a proximal gradient {\sc PnP-Prox} method as described in  \cite{ryu2019plug}. We utilize the ADMM formulation of {\sc RED}.
Deep Unrolled methods (DU) are described in \cref{sec:DU}; we consider DU using gradient descent, proximal gradient, and ADMM.
The preconditioned Neumann network \cite{gilton2019neumann} does not have simple Deep Equilibrium or Plug-and-Play analogues, and is included as an alternative deep unrolled method.

All deep unrolled methods have \emph{tied weights}, i.e. the network used at each iteration is the same. This is done to ensure that the number of parameters in the deep unrolled, deep equilibrium, and Plug-and-Play/RED methods are the same. Moreover, on modestly-sized datasets recent work has shown that tied weights result in better reconstructions \cite{aggarwal2018modl}. 

We compare the above approaches across three inverse problems: Image deblurring (Deblur), compressed sensing (CS), and accelerated MRI reconstruction (MRI). In the Deblur setting we simulate blurry images using $9\times 9$ pixel Gaussian blur kernel with variance $5$ and consider two noise levels: Deblur (1) refers to the setting of additive white Gaussian noise with variance $\sigma=0.01$ (high noise), and Deblur (2) refers the setting of additive white Gaussian noise with standard deviation $\sigma  = 0.0001$ (low noise). In the CS setting, we take linear measurements of the image by forming inner products with random Gaussian vectors whose entries i.i.d.~standard normals, and use an undersampling factor of $4\times$, \ie the corresponding forward model $A$ has $4\times$ is a random Gaussian matrix fewer rows than columns. The measurements are then corrupted with additive white Gaussian noise with standard deviation $\sigma = 0.01$.  In the MRI setting, we investigate recovery at $4\times$ and $8\times$ acceleration, where ``$\rho\times$ acceleration'' indicates an undersampling factor of $\rho$ in k-space (not to be confused with the acceleration techniques used in finding fixed points). Our MRI experiments focus on the case of (virtual) single-coil MR data acquired on a Cartesian grid in k-space, where corresponding forward model $A$ is a subsampling of rows of the discrete Fourier transform matrix. We additionally add complex white Gaussian noise to the undersampled k-space measurements with standard deviation $\sigma = 0.01$.

For the Deblur and CS problems, we utilize a subset of the Celebrity Faces with Attributes (CelebA) dataset \cite{liu2015faceattributes}, which consists of centered human faces. We train on a subset of 10000 of the training images. All images are resized to 128$\times$128. For the MRI problem, we use a random subset of size 2000 of the fastMRI single-coil knee dataset \cite{zbontar2018fastMRI} for training. We trim the ground truth MR images to a 320$\times$320 pixel region-of-interest.

\subsection{Architecture Specifics}

For our learned network, we utilize a DnCNN architecture as in \cite{ryu2019plug}. 
(We also experimented with U-Nets, but found that DnCNN yielded superior performance for both our proposed deep eq methods and the comparison methods.)
For both the CelebA and fastMRI datasets, we train six DnCNN denoisers with noise variances $\sigma^2=0.1, 0.05, 0.02, 0.01, 0.005, 0.001$ on the training split. Training follows the methodology of \cite{ryu2019plug}. Specifically, to ensure contractivity of the learned component, we add spectral normalization to all layers, ensuring that each layer has a Lipschitz constant bounded above by 1. This normalization is enforced during pretraining as well as during the Deep Equilibrium training phrase.

During training, we utilize Anderson acceleration for both the forward and backward pass fixed-point iterations. In backward passes, the number of fixed-point iterations was limited to 50 due to memory constraints, but fixed-point iterations in forward passes were run until convergence was observed (defined to be when the relative norm difference between iterations is less than $10^{-3}$). Test time results were produced using the same method as the forward pass during training, but over the test set. We compare different fixed-point calculation methods for the forward and backward passes in Section \ref{sec:accelerationstudy}.

Further details on settings, parameter choices, and data may be found in the supplementary materials and in our publicly-available code.\footnote{Available at: \url{https://github.com/dgilton/deep_equilibrium_inverse}}

\subsection{Parameter Tuning and Pretraining}
\label{sec:tune}


Each of the iterative optimization algorithms we test has its own set of hyperparameters to choose, \eg the step size $\eta$ in {\sc DE-Grad}, plus any parameters used to calculate the initial estimate $x^{(0)}$.
Tuning hyperparameters requires choosing a particular regularization network $R_\theta$ during tuning. We choose from a collection of $R_\theta$ that have been pretrained for Gaussian denoising at different noise levels. Pretraining can be done on the training dataset (\eg training on MRI images directly) or using an independent dataset (\eg the BSD500 image dataset \cite{martin2001database}). We use the former approach in our experiments.

To tune hyperparameters, we first choose parameters to optimize the performance of {\sc PnP} on a validation set via a grid search over pretrained $R_\theta$ as well as algorithm-specific hyperparameters (such as the $\eta$ step-size parameter in gradient descent approaches).
Then, we use the {\sc PnP} hyperparameter settings as initial hyperparameter values when  training Deep Equilibrium or Deep Unrolling methods. 


\subsection{Main Results}
\label{sec:results}

We present the main reconstruction accuracy comparison in Table \ref{tab:reconperformance}. Each entry for Deep Equilibrium (DE), Regularization by Denoising (RED), and Plug-and-Play ({\sc PnP}) approaches is the result of running fixed-point iterations until the relative change between iterations is less than $10^{-3}$. During training, all DE models were limited to a maximum 100 forward updates, but terminate iterations on the relative norm difference between iterations falling below $10^{-3}$. The DU models are tested at the number of iterations for which they were trained 
and all parameters for TV reconstructions (including number of TV iterations) are cross-validated to maximize PSNR. Performance as a function of iteration is shown in Figs.~\ref{fig:psnrclock},~\ref{fig:iteratesdeblur}, and~\ref{fig:iteratescs}, with example reconstructions in Fig.~\ref{fig:MRIExample}. Further example reconstructions are available for qualitative evaluation in the supplementary materials.

We observe our Deep Equilibrium-based approaches consistently outperform Deep Unrolled approaches across different choices of base algorithm (\ie {\sc Grad,Prox,ADMM}). Among choices of iterative reconstruction architectures for $f_\theta$, there does not appear to be an obvious winner, suggesting the optimal choice may be problem- or setting-dependent.

\begin{figure}[ht]
\begin{center}
\subfigure[Deblurring (2)]{\label{fig:iteratesdeblur}\includegraphics[width=0.49\linewidth]{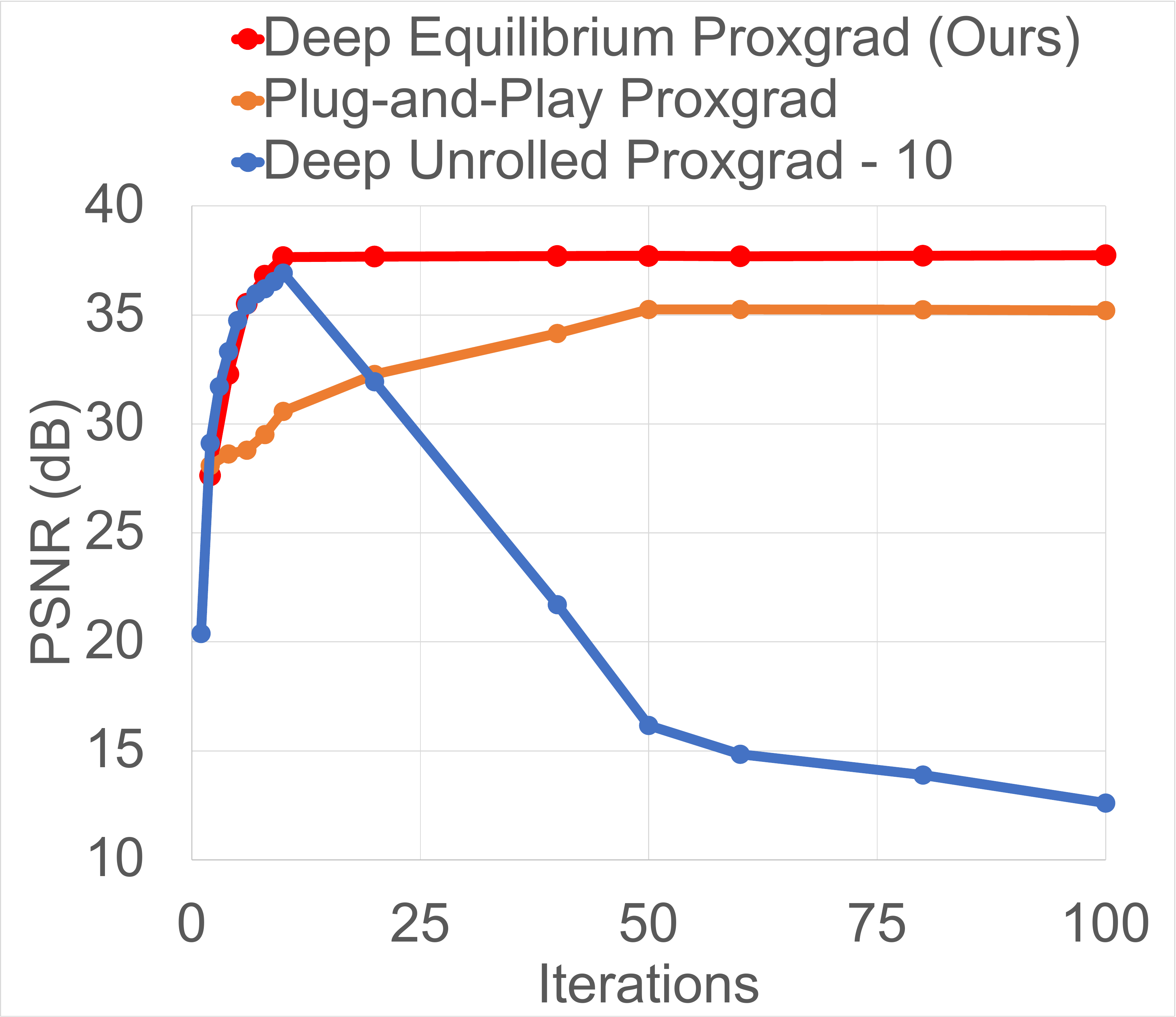}}~
\subfigure[CS $4\times$]{\label{fig:iteratescs}\includegraphics[width=0.49\linewidth]{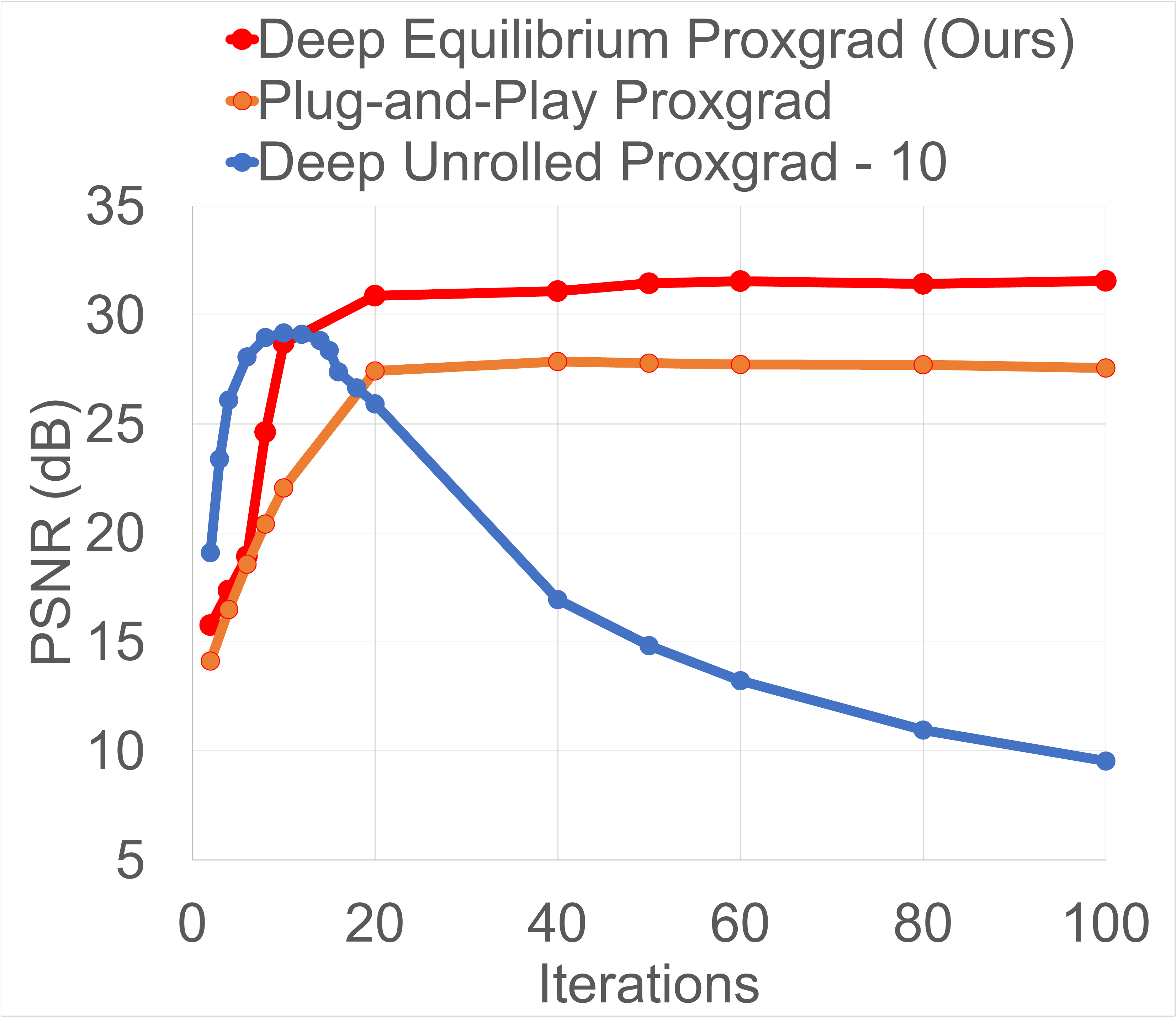}}
\vspace{-.2in}
\caption{Iterations vs. reconstruction PSNR for {\sc DE-Prox} and competing methods for (a) deblurring and (b) compressed sensing. MRI results are in Fig.~\ref{fig:psnrclock}. The deep unrolled ProxGrad was trained for 10 iterations. In all examples, deep unrolling is only effective at the number of iterations for which it is trained, whereas deep equilibrium achieves higher PSNR across a broad range of iterations, allowing a user to trade off computation time and accuracy.}
\label{fig:iterationstudy}
\end{center}
\end{figure}


\begin{figure}[ht!]
\begin{center}
\subfigure[Ground truth]{\label{fig:mri_true}\includegraphics[width=0.24\linewidth]{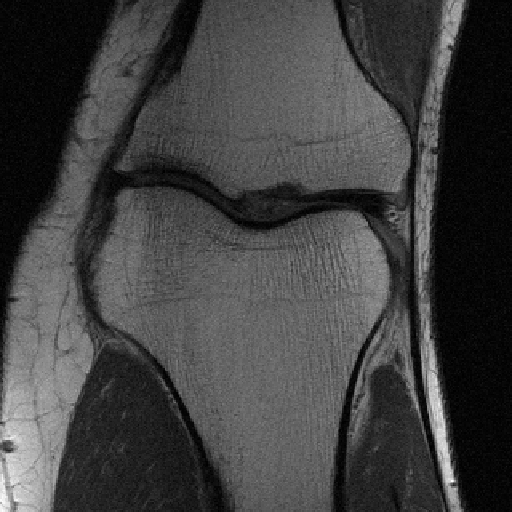}}~
\subfigure[IFFT ($A^\top y$), PSNR = 24.53 dB dB]{\label{fig:ifft}\includegraphics[width=0.24\linewidth]{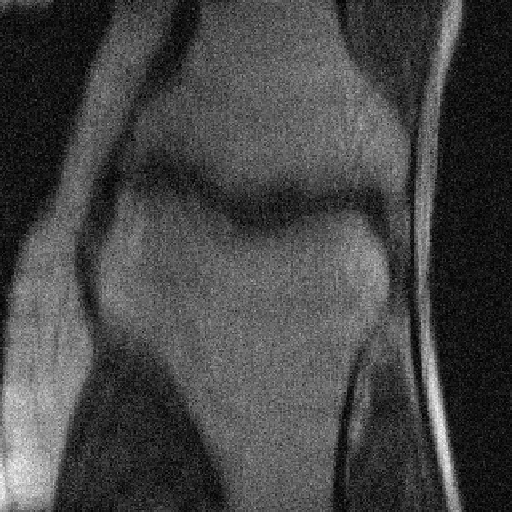}}
\subfigure[{\sc DU-Prox}, PSNR = 31.02 dB dB]{\label{fig:du_mri}\includegraphics[width=0.24\linewidth]{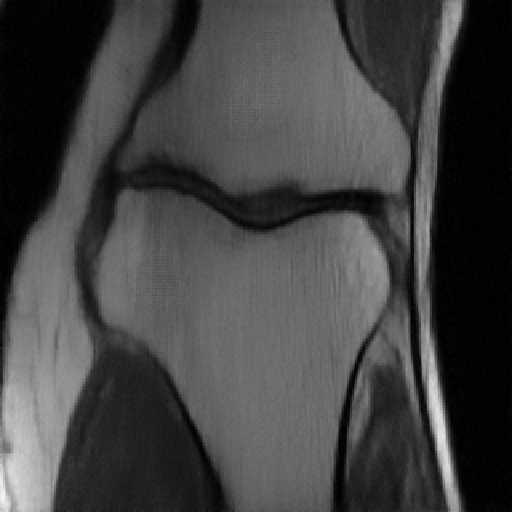}}~
\subfigure[{\sc DE-Prox}, PSNR = 32.09 dB dB]{\label{fig:de_mri}\includegraphics[width=0.24\linewidth]{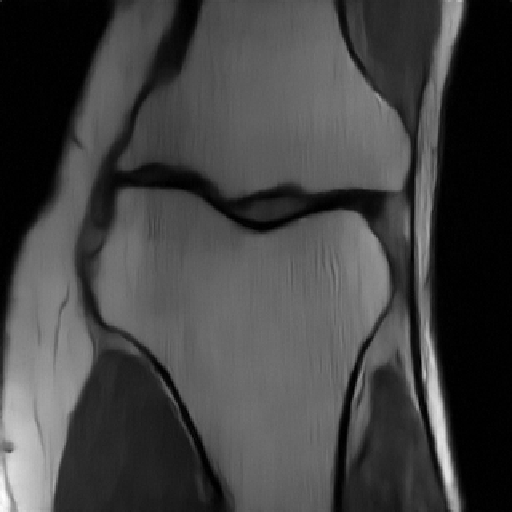}}
\vspace{-.1in}
\caption{$8\times$ accelerated MRI reconstruction example. Best viewed digitally.}
\vspace{-.1in}
\label{fig:MRIExample}
\end{center}
\end{figure}

As seen in Figs. \ref{fig:psnrclock},~\ref{fig:iteratesdeblur}, and~\ref{fig:iteratescs}, the Deep Equilibrium approach 
generally outperforms
Deep Unrolled solvers. 
Figs.~\ref{fig:psnrclock},~\ref{fig:iteratesdeblur}, and~\ref{fig:iteratescs} show that our approach
requires \emph{no more computation} than Deep Unrolled networks to achieve the same performance level and has an increasing advantage with further computation.

\subsection{Effect of Acceleration} \label{sec:accelerationstudy}

Here we demonstrate the effect of using different fixed-point solvers during both the training and inference procedures. Leveraging acceleration can decrease computational costs during both training and inference and result in better empirical performance at inference. 
Table \ref{tab:memorytimecosts} compares Anderson accelerated Deep Equilibrium approaches with Deep Unrolling, Plug and Play, Deep Equilibrium utilizing Broyden's Method (as was used in \cite{bai2019deep}), and non-accelerated Deep Equilibrium. All results were determined using PyTorch utilizing an NVidia RTX 2080 Ti GPU. As mentioned previously, non-accelerated Deep Equilibrium at inference time has identical per-iteration cost and memory requirements as Plug and Play and Deep Unrolling.

We observe that while finding the fixed-points using Broyden's method and Anderson acceleration requires more time per iterate, convergence occurs faster than a standard fixed-point iteration so the net time spent at inference is less. Since the Broyden solution was slightly worse in terms of PSNR, Anderson acceleration was used for all other experiments.

For practical matters, the memory cost of each of the compared methods may also be an important factor to consider. At train time, Deep Unrolled methods require memory scaling linearly with the number of iterations used, while Deep Equilibrium methods require only the memory necessary to compute the gradient in \eqref{eq:fullgrad}, which is what permits training at convergence. Plug and Play methods are the least memory-intensive of all to train. At inference time, each method only needs to store at most a constant number of iterations, so all methods are cheap in terms of memory to evaluate.

\begin{table}[t]
\label{tab:memorytimecosts}
\begin{adjustbox}{width=0.7\columnwidth,center}
\begin{tabular}{l|c|c|c}
\toprule
  & Total Time (s) & Time/Iteration (s) & PSNR (dB) \\ \hline
Plug $\&$ Play  & 1.24 & 0.025 & 31.56 \\ \hline
{\sc DU-Prox} & 0.25 & 0.025 & 32.49 \\ \hline
{\sc DE-Prox} & 1.22 & 0.025 & 31.86 \\ \hline
{\sc DE-Prox} (Anderson) & 1.17 & 0.046 & 33.66 \\ \hline
{\sc DE-Prox} (Broyden) & 1.85 & 0.039 & 33.61
\end{tabular}
\end{adjustbox}
\caption{Mean computation time required to reach convergence and resulting mean reconstruction PSNR in $4\times$ accelerated MRI reconstruction with complex image size $320\times320$, as computed over the test set.}
\end{table}

\subsection{Effect of Pre-Training} \label{sec:pretraining}

Here we compare the effect of initializing the learned component $R_\theta$ in our deep equilibrium models with a pretrained denoiser versus initializing with random weights. 
We use the same hyperparameter tuning scheme described in Section~\ref{sec:tune}.

We present our results on Deep Equilibrium Proximal Gradient Descent ({\sc DE-Prox}) in Figure \ref{fig:initializationstudy}. We observe an improvement in reconstruction quality when utilizing our pretraining method compared to a random initialization. We also note that pretraining enables a simple choice of algorithm-specific hyperparameters, such as the initial internal step size for {\sc DE-Prox}. 

\subsection{Noise Sensitivity}

We observe empirically that the Deep Equilibrium approach to training achieves competitive reconstruction quality and increased flexibility with respect to allocating computation budget at inference time. Recent work in deep inversion has questioned these methods' robustness to noise and unexpected inputs \cite{antun2020instabilities, raj2020improving, genzel2020solving}.

To examine whether the Deep Equilibrium approach is brittle to simple changes in the noise distribution, we varied the level of Gaussian noise added to the observations at test time and observed the effect on reconstruction quality in a setting where {\sc DE-Prox} and {\sc DU-Prox} perform similarly. Fig. \ref{fig:noisesensitivity} demonstrates that the Deep Equilibrium model {\sc DE-Prox} is more robust to variation in the noise level than the analogous Deep Unrolled approach {\sc DU-Prox}. The forward model used in Fig. \ref{fig:noisesensitivity} is $8\times$ MRI reconstruction.

\begin{figure}[ht]
\begin{center}
\subfigure{\label{fig:initializationstudy}\includegraphics[width=0.4\linewidth]{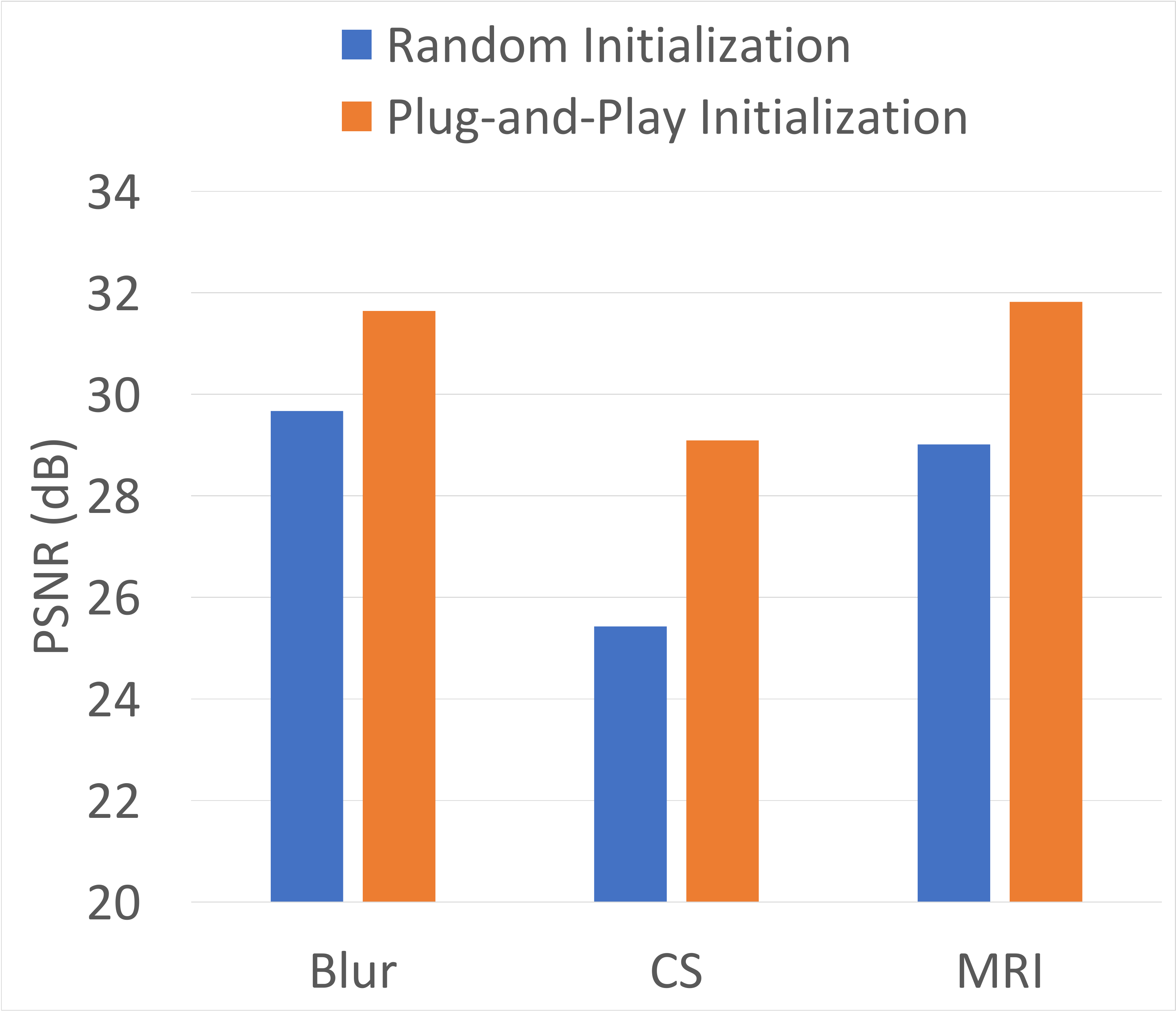}}
\subfigure{\label{fig:noisesensitivity}\includegraphics[width=0.4\linewidth]{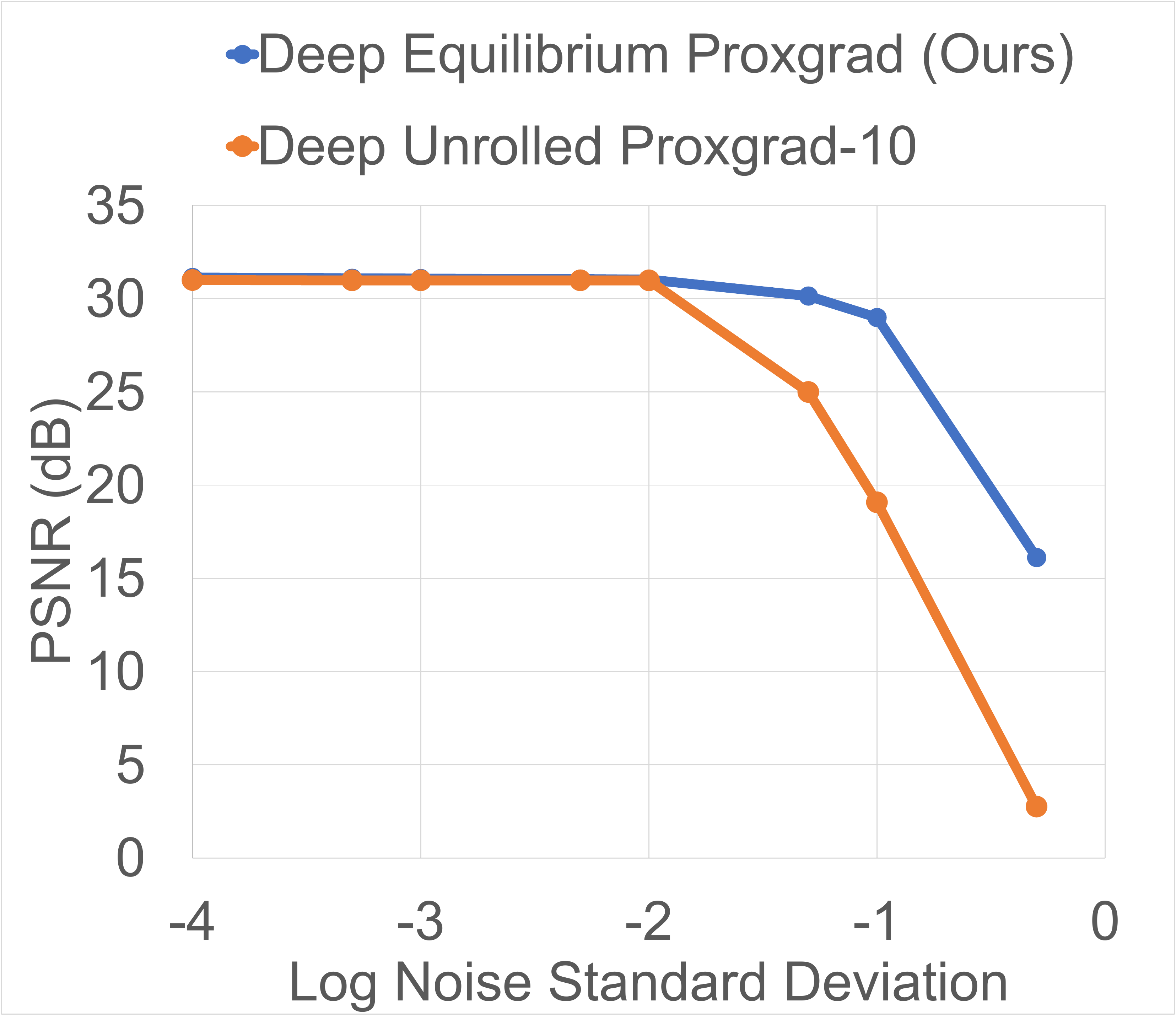}}
\caption{(a) Comparison of learned {\sc DE-Prox} reconstruction quality across three different inverse problems: Deblurring (Blur), compressed sensing (CS), and undersampled MRI reconstruction (MRI). In our experiements, initializing with a pretrained denoiser routinely offered as good or better reconstruction quality (in terms of PSNR) than a random initialization. (b) Noise sensitivity comparison between {\sc DU-Prox} and {\sc DE-Prox}. The forward model used is $8\times$ MRI reconstruction, and $\sigma$ here corresponds to the level of Gaussian noise added to observations.}
\end{center}
\end{figure}

\section{Conclusions}
This paper illustrates non-trivial quantitative benefits to using implicitly-defined infinite-depth networks for solving linear inverse problems in imaging. These empirical benefits complement convergence guarantees that are unavailable to widely-used deep unrolling methods. Other recent work has focused on such {\em implicit networks} akin to the deep equilibrium models considered here (e.g. \cite{el2019implicit}). Whether these models could lead to additional advances in image reconstruction remains an open question for future work. 
Furthermore, while the exposition in this work focused on \emph{linear} inverse problems, nonlinear inverse problems may be solved with iterative approaches just as well. The conditions under which deep equilibrium methods proposed here may be used on such iterative approaches are an active area of investigation.

\bibliographystyle{IEEEtran}
\bibliography{refs}
\newpage

\section{Appendix}

\subsection{Further Qualitative Results}

In this section, we provide further visualizations of the reconstructions produced by Deep Equilibrium models and the corresponding Deep Unrolled approaches, beyond those shown in the main body. Figures \ref{fig:deblurringsamples}, \ref{fig:cssamples}, and \ref{fig:mrisamples} are best viewed electronically, and contain the ground-truth images, the measurements (projected back to image space in the case of MRI and compressed sensing), and reconstructions by {\sc DU-Prox} and {\sc DE-Prox}.

We also visualize the intermediate iterates in the fixed-point iterations, to further demonstrate the convergence properties of DEMs for image reconstruction. We find that DEMs converge quickly to reasonable reconstructions, and maintain high-quality reconstructions after more than one hundred iterations. 

\begin{figure}[ht!]
\centering
\begin{tabular}{@{}c@{}c@{}c@{}c@{}c@{}}
\begin{minipage}{0.2\linewidth} Ground \\ Truth\end{minipage} & \includegraphics[width = 0.18\linewidth, align=c]{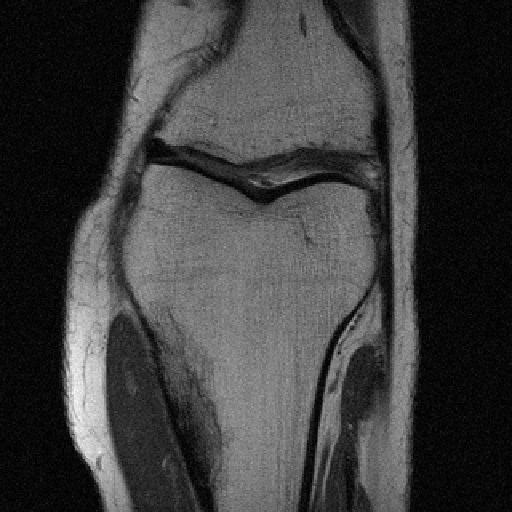} &
\includegraphics[width = 0.18\linewidth, align=c]{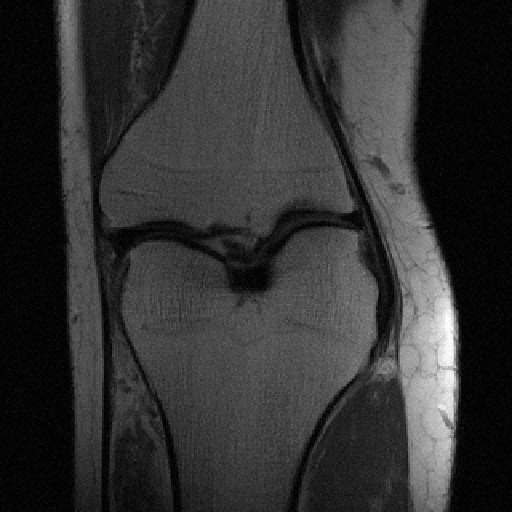} &
\includegraphics[width = 0.18\linewidth, align=c]{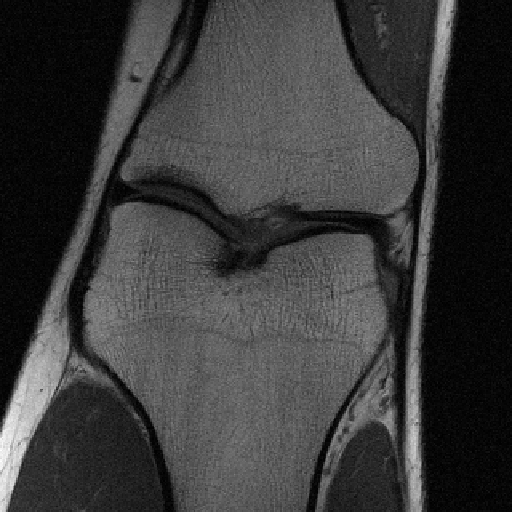} & 
\includegraphics[width = 0.18\linewidth, align=c]{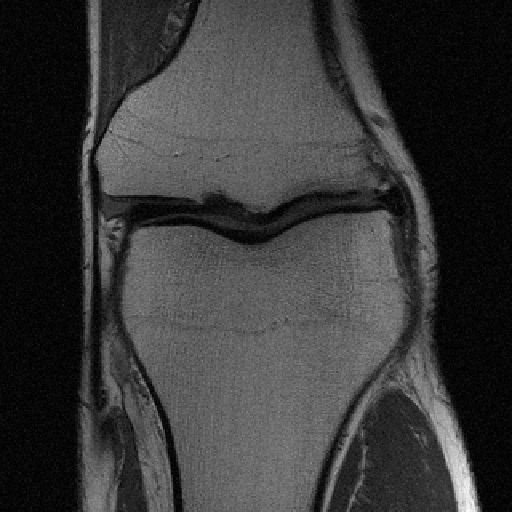} \\
\begin{minipage}{0.2\linewidth} IFFT \\ $A^\top y$ \end{minipage} & \includegraphics[width = 0.18\linewidth, align=c]{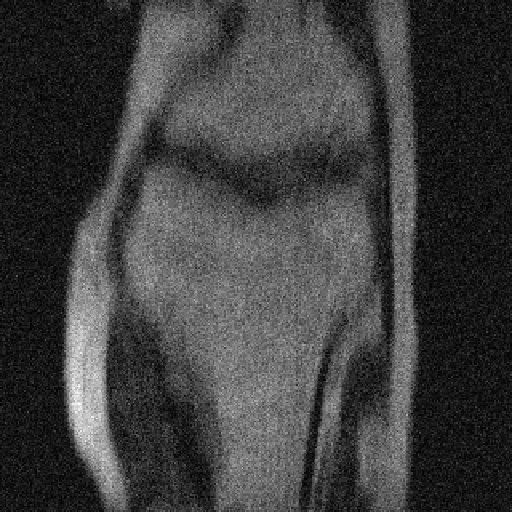} &
\includegraphics[width = 0.18\linewidth, align=c]{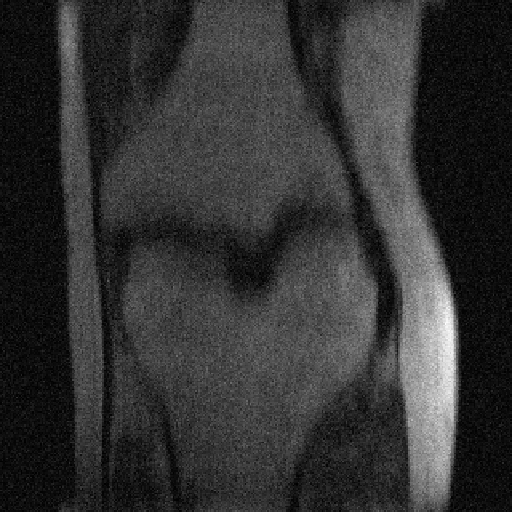} &
\includegraphics[width = 0.18\linewidth, align=c]{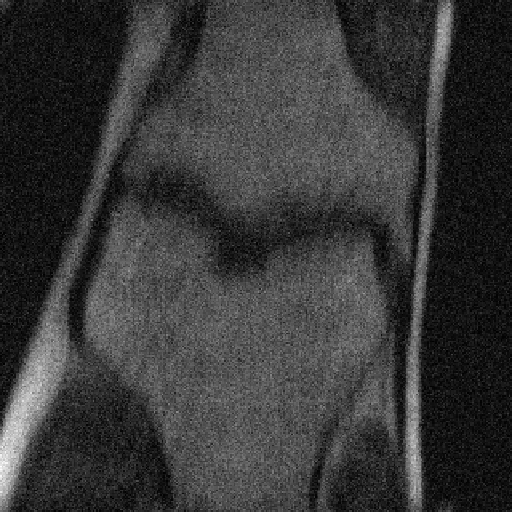} & 
\includegraphics[width = 0.18\linewidth, align=c]{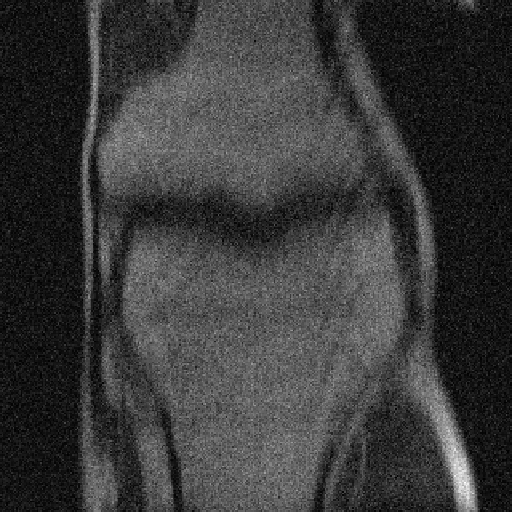} \\
\begin{minipage}{0.2\linewidth} {\sc DU-Prox} \end{minipage} & \includegraphics[width = 0.18\linewidth, align=c]{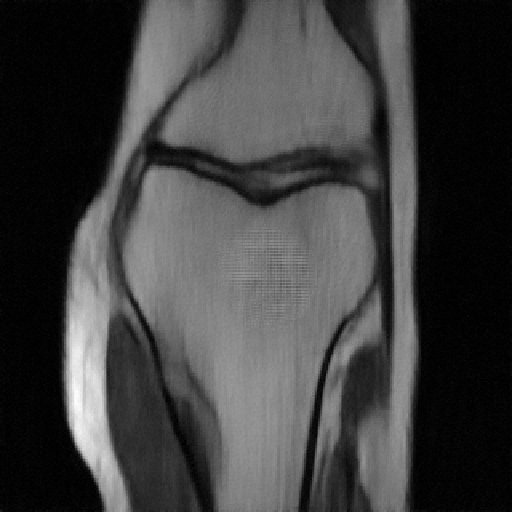} &
\includegraphics[width = 0.18\linewidth, align=c]{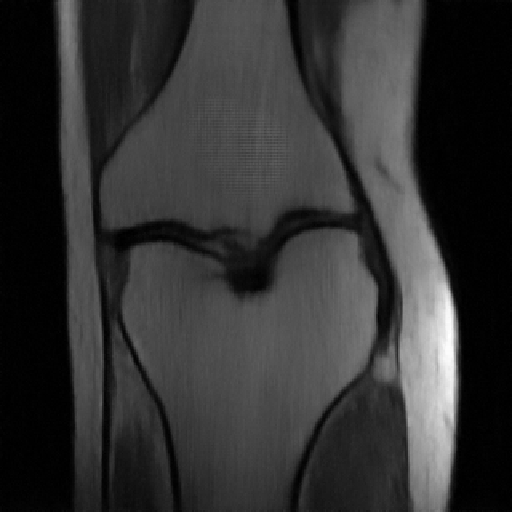} &
\includegraphics[width = 0.18\linewidth, align=c]{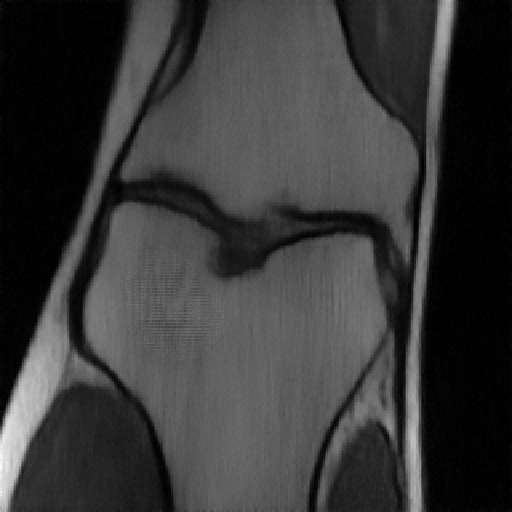} &
\includegraphics[width = 0.18\linewidth, align=c]{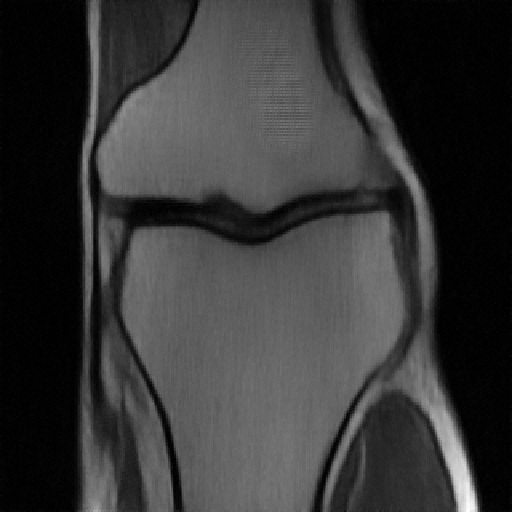} \\
\begin{minipage}{0.2\linewidth} {\sc DE-Prox} \\ (Ours)\end{minipage} & \includegraphics[width = 0.18\linewidth, align=c]{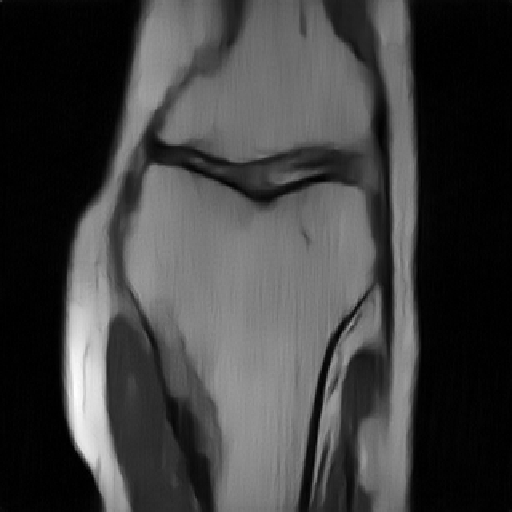} &
\includegraphics[width = 0.18\linewidth, align=c]{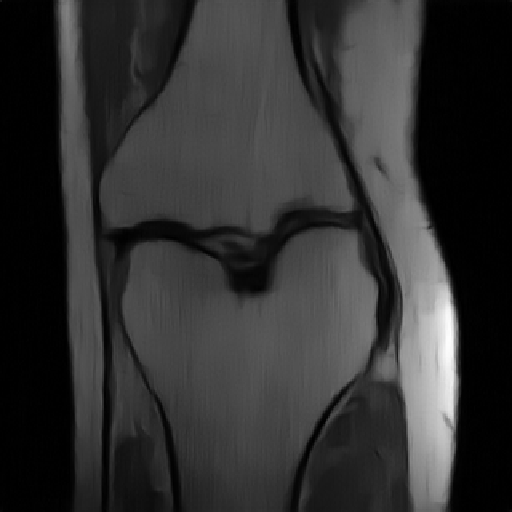} &
\includegraphics[width = 0.18\linewidth, align=c]{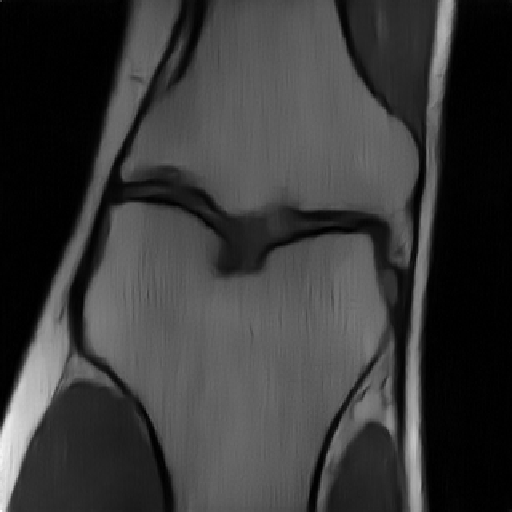} & 
\includegraphics[width = 0.18\linewidth, align=c]{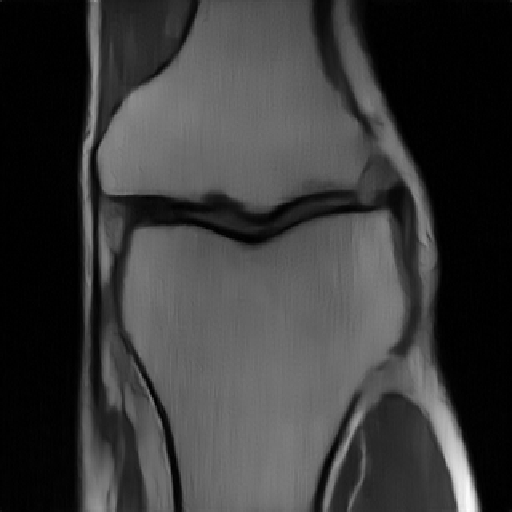}
\end{tabular}
\caption{Sample images and reconstructions with for $8\times$ accelerated MRI reconstruction with additive noise of $\sigma=0.01$. Best viewed electronically.}
\label{fig:mrisamples}
\end{figure}

\begin{figure}[ht!]
\centering
\begin{tabular}{@{}c@{}c@{}c@{}c@{}c@{}}
\begin{minipage}{0.2\linewidth} Ground \\ Truth\end{minipage} & \includegraphics[width = 0.15\linewidth, align=c]{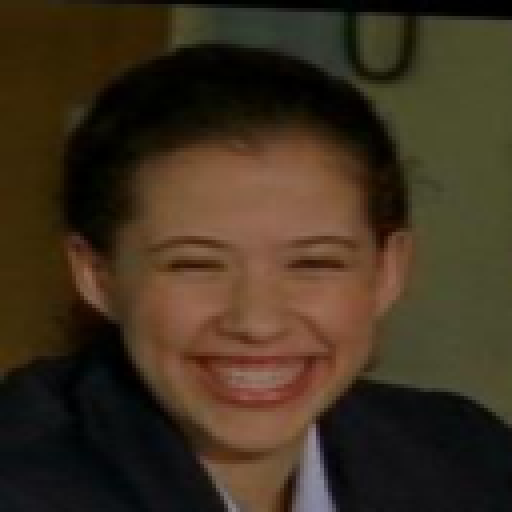} &
\includegraphics[width = 0.15\linewidth, align=c]{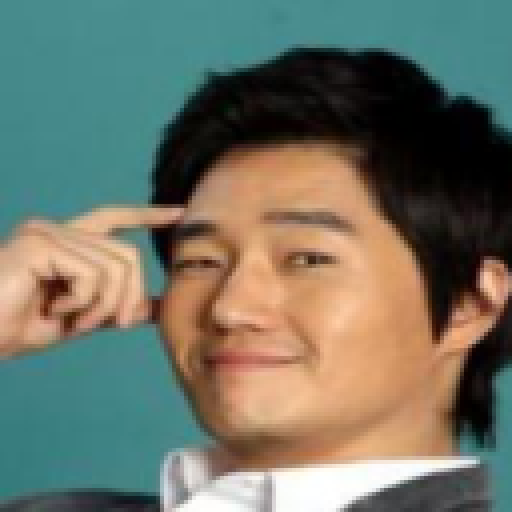} &
\includegraphics[width = 0.15\linewidth, align=c]{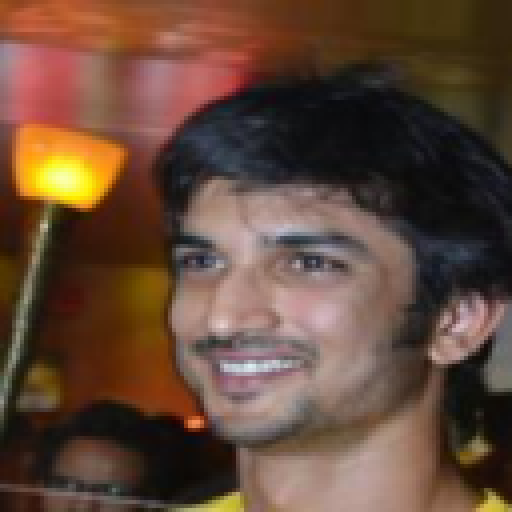} & 
\includegraphics[width = 0.15\linewidth, align=c]{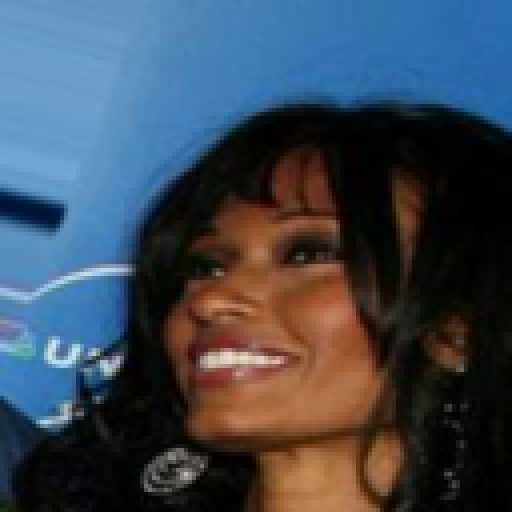} \\
\begin{minipage}{0.2\linewidth} $A^\top y$ \end{minipage} & \includegraphics[width = 0.15\linewidth, align=c]{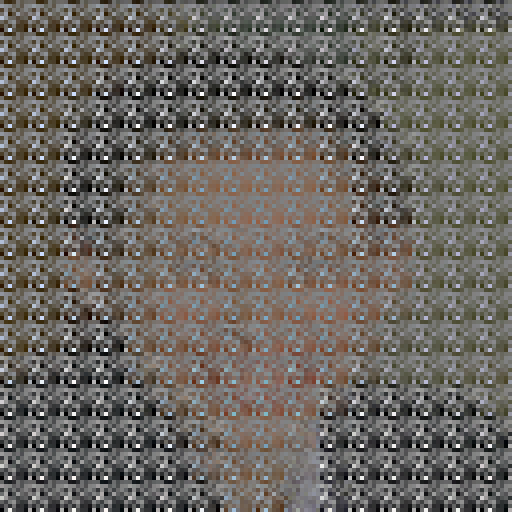} &
\includegraphics[width = 0.15\linewidth, align=c]{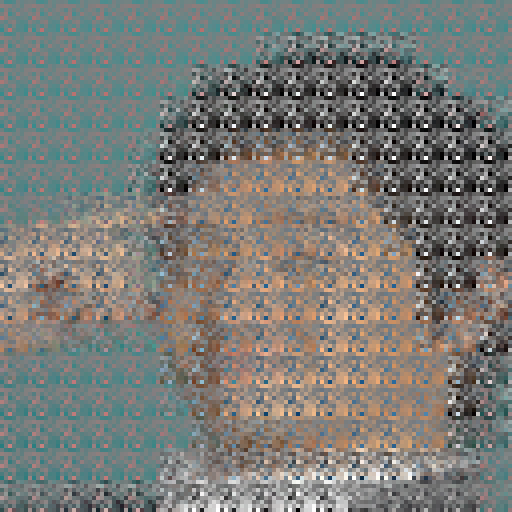} &
\includegraphics[width = 0.15\linewidth, align=c]{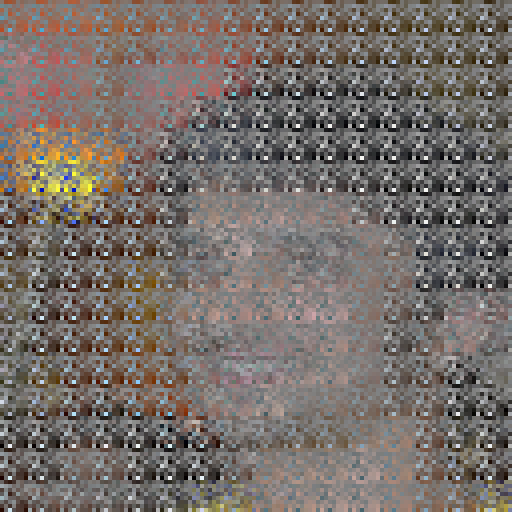} & 
\includegraphics[width = 0.15\linewidth, align=c]{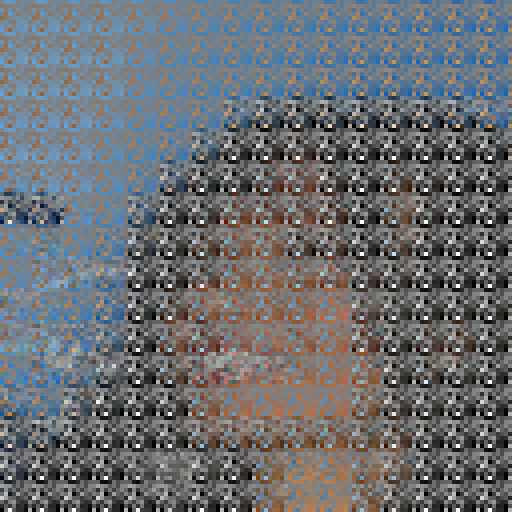} \\
\begin{minipage}{0.2\linewidth} {\sc DU-Prox} \end{minipage} & \includegraphics[width = 0.15\linewidth, align=c]{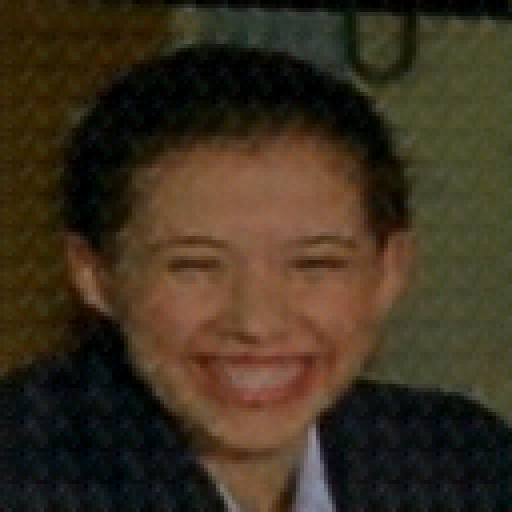} &
\includegraphics[width = 0.15\linewidth, align=c]{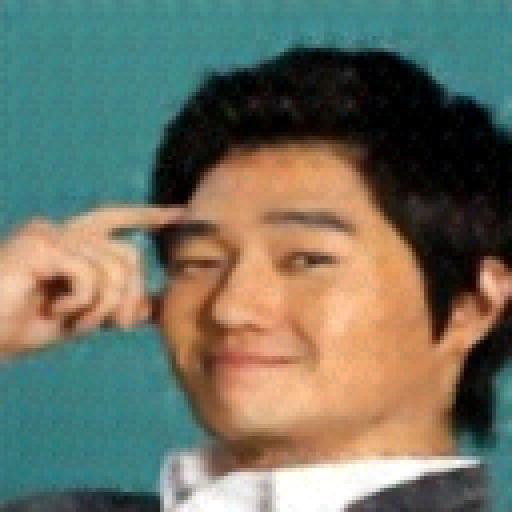} &
\includegraphics[width = 0.15\linewidth, align=c]{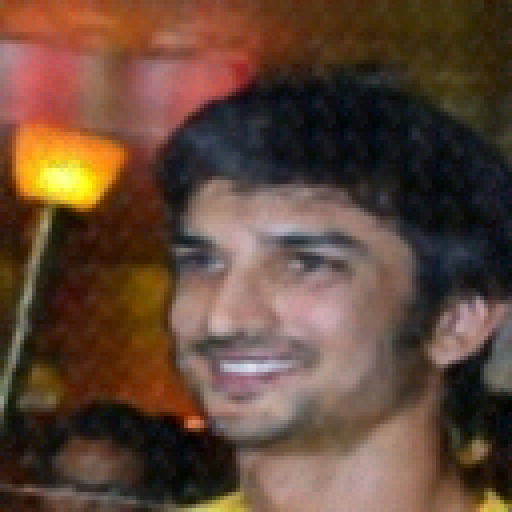} &
\includegraphics[width = 0.15\linewidth, align=c]{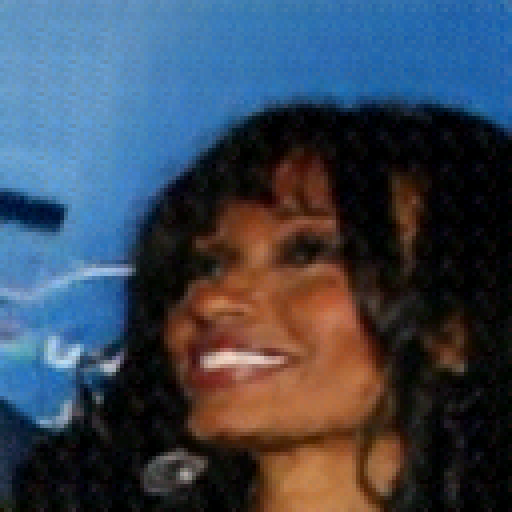} \\
\begin{minipage}{0.2\linewidth} {\sc DE-Prox} \\ (Ours)\end{minipage} & \includegraphics[width = 0.15\linewidth, align=c]{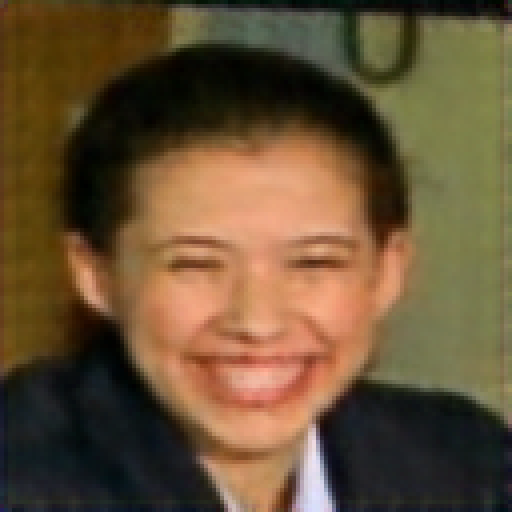} &
\includegraphics[width = 0.15\linewidth, align=c]{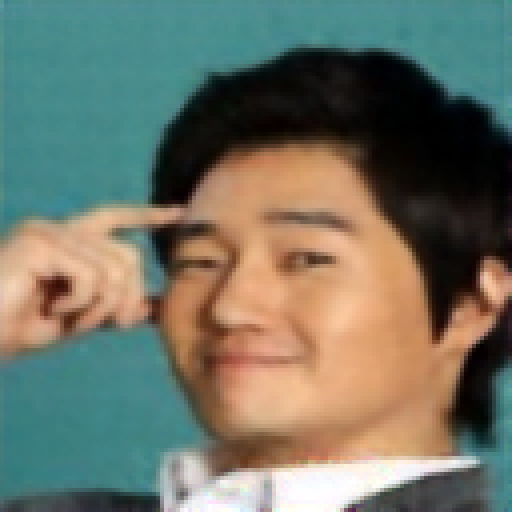} &
\includegraphics[width = 0.15\linewidth, align=c]{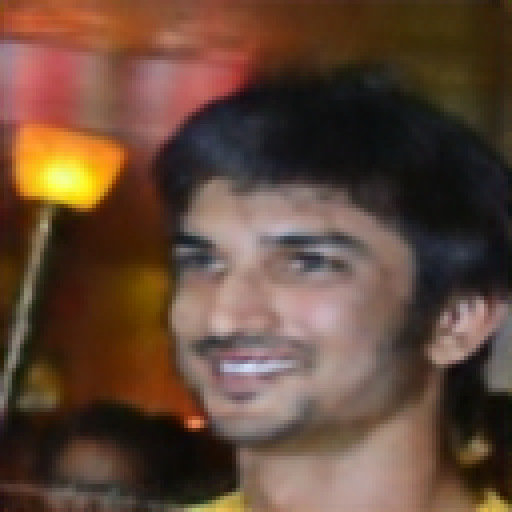} & 
\includegraphics[width = 0.15\linewidth, align=c]{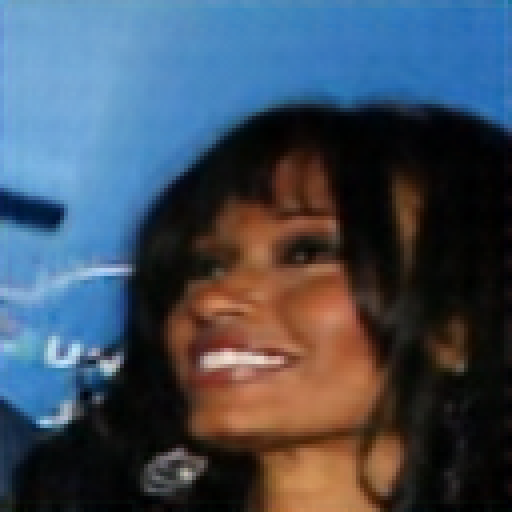}
\end{tabular}
\caption{Sample images and reconstructions for $4\times$ Gaussian compressed sensing with additive noise of $\sigma=0.01$. Best viewed electronically.}
\label{fig:cssamples}
\end{figure}

\begin{figure}[ht!]
\centering
\begin{tabular}{@{}c@{}c@{}c@{}c@{}c@{}}
\begin{minipage}{0.2\linewidth} Ground \\ Truth\end{minipage} & \includegraphics[width = 0.15\linewidth, align=c]{figures/supplement/deblur/truth_3.png} &
\includegraphics[width = 0.15\linewidth, align=c]{figures/supplement/deblur/truth_36.png} &
\includegraphics[width = 0.15\linewidth, align=c]{figures/supplement/deblur/truth_55.png} & 
\includegraphics[width = 0.15\linewidth, align=c]{figures/supplement/deblur/truth_93.png} \\
\begin{minipage}{0.2\linewidth} Measure-\\ments $y$ \end{minipage} & \includegraphics[width = 0.15\linewidth, align=c]{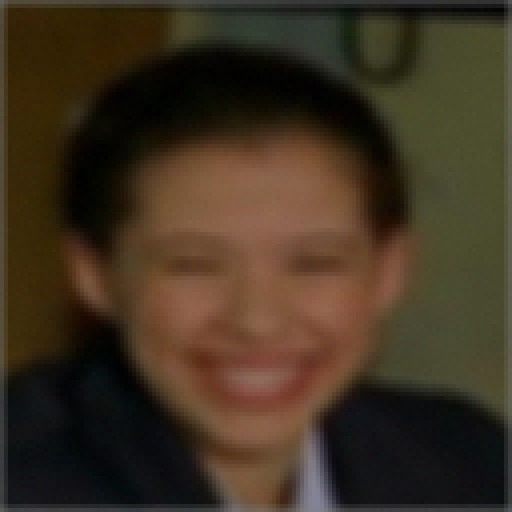} &
\includegraphics[width = 0.15\linewidth, align=c]{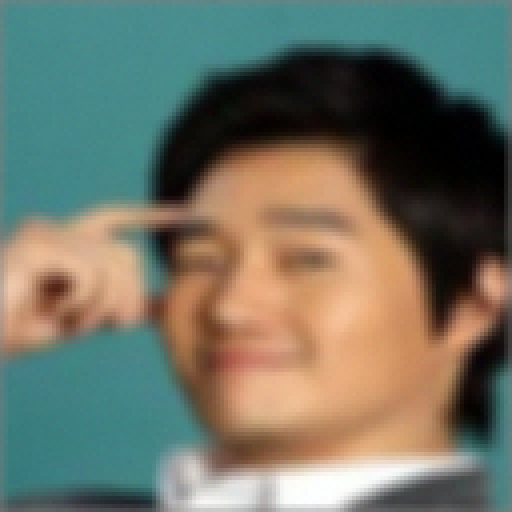} &
\includegraphics[width = 0.15\linewidth, align=c]{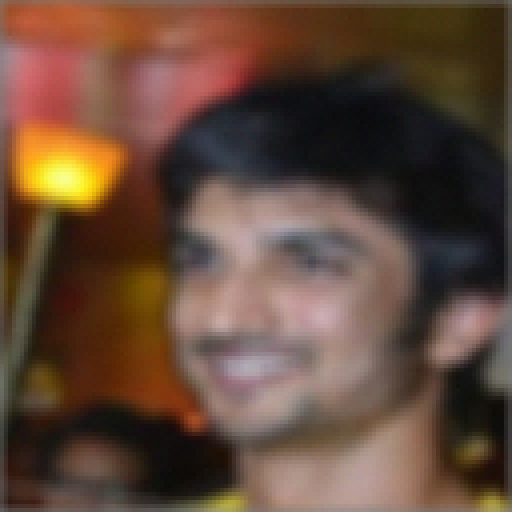} & 
\includegraphics[width = 0.15\linewidth, align=c]{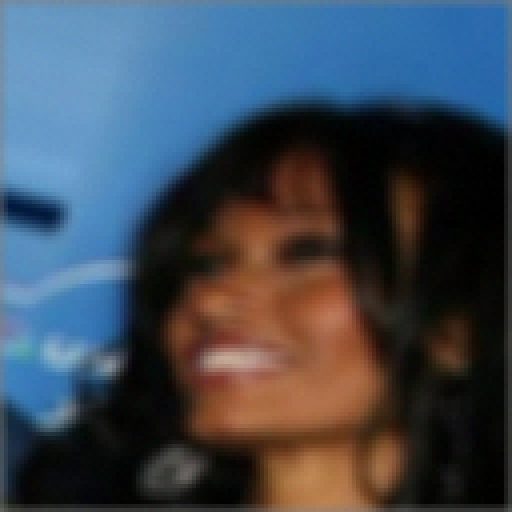} \\
\begin{minipage}{0.2\linewidth} {\sc DU-Prox} \end{minipage} & \includegraphics[width = 0.15\linewidth, align=c]{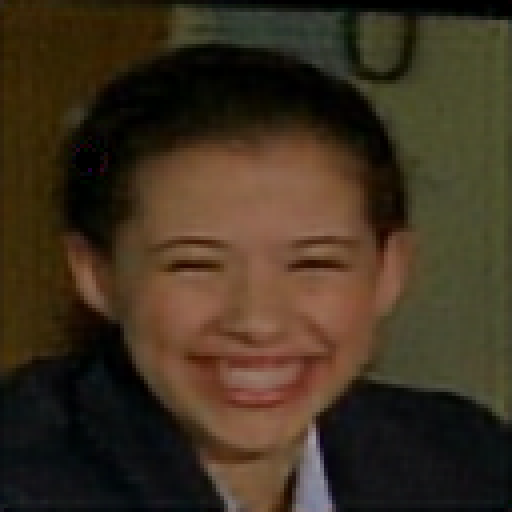} &
\includegraphics[width = 0.15\linewidth, align=c]{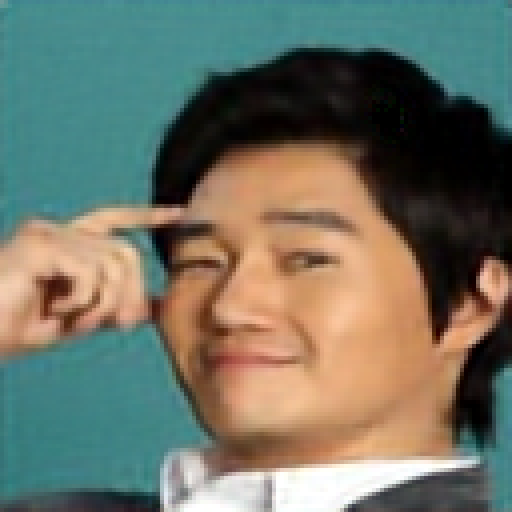} &
\includegraphics[width = 0.15\linewidth, align=c]{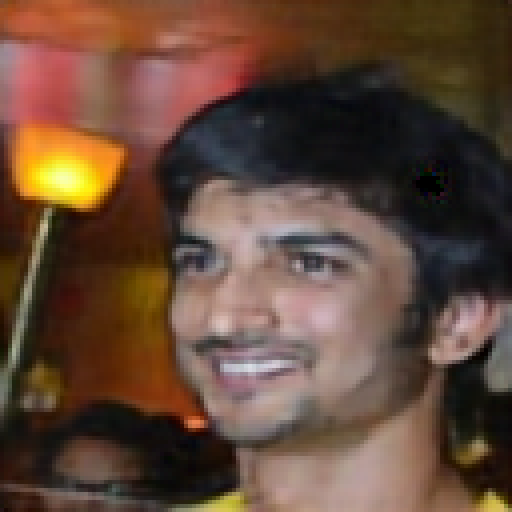} &
\includegraphics[width = 0.15\linewidth, align=c]{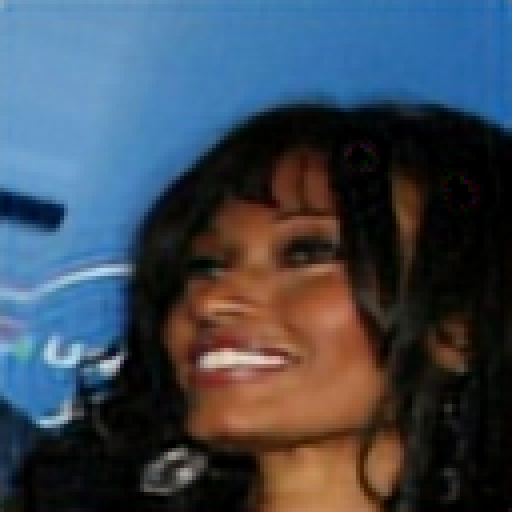} \\
\begin{minipage}{0.2\linewidth} {\sc DE-Prox} \\ (Ours)\end{minipage} & \includegraphics[width = 0.15\linewidth, align=c]{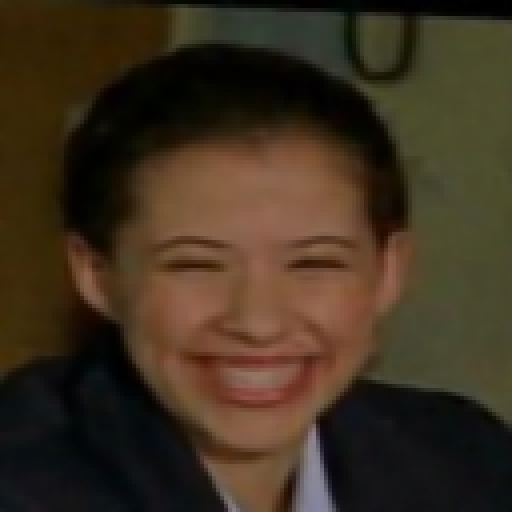} &
\includegraphics[width = 0.15\linewidth, align=c]{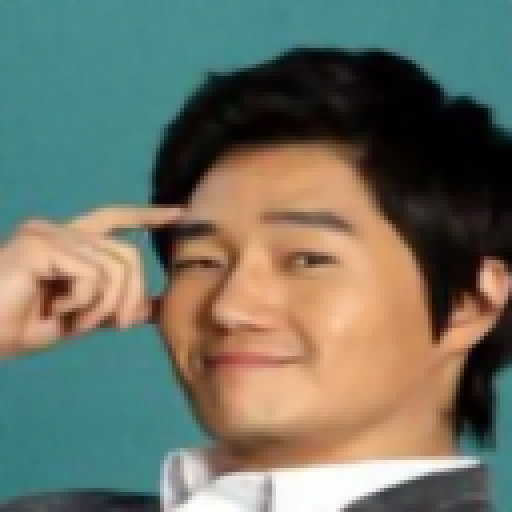} &
\includegraphics[width = 0.15\linewidth, align=c]{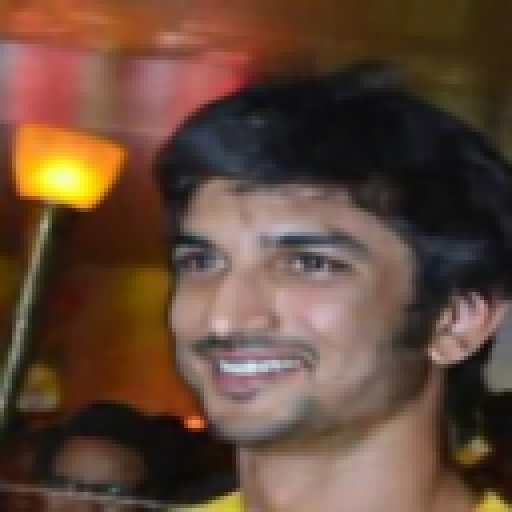} & 
\includegraphics[width = 0.15\linewidth, align=c]{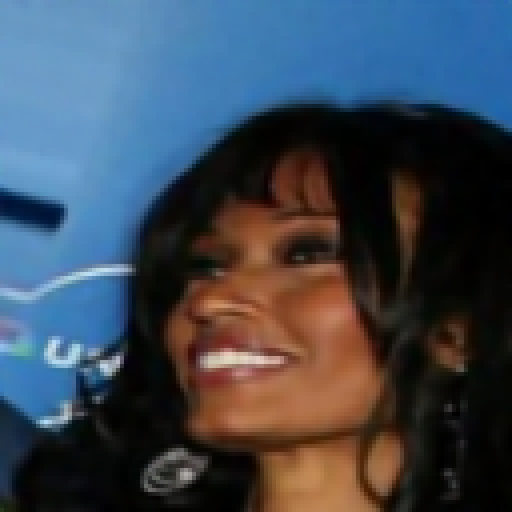}
\end{tabular}
\caption{Sample images and reconstructions for Gaussian deblurring with additive noise of $\sigma=0.01$. Best viewed electronically.}
\label{fig:deblurringsamples}
\end{figure}

\subsection{Visualizing Iterates}

In Figures \ref{fig:csiterates} and \ref{fig:mriiterates} we visualize the outputs of the $K$'th iteration of the mapping $f_\theta$ in {\sc DE-Prox}. We observe that across forward problems, the reconstructions converge to good reconstructions. 

We illustrate 90 iterations for compressed sensing and 31 for MRI reconstructions (as iterations terminate at 31 iterations). 

\begin{figure*}[ht!]
\renewcommand*{\arraystretch}{0}
\centering
\begin{tabular}{c@{}c@{}c@{}c@{}c@{}}
K=0 & K=10 & K=20 & K=30 & K=31 \vspace{2pt} \\
 \includegraphics[width = 0.15\linewidth, align=c]{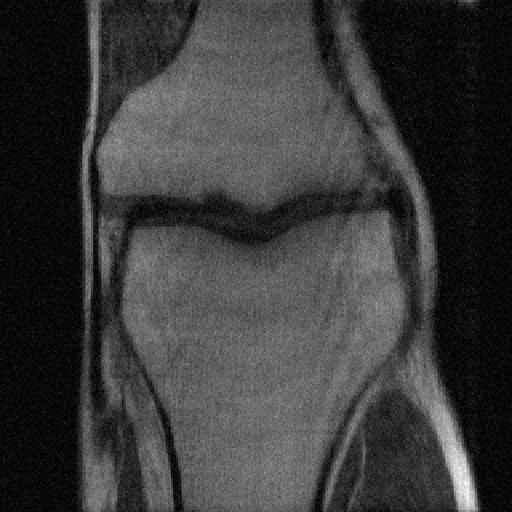} &
\includegraphics[width = 0.15\linewidth, align=c]{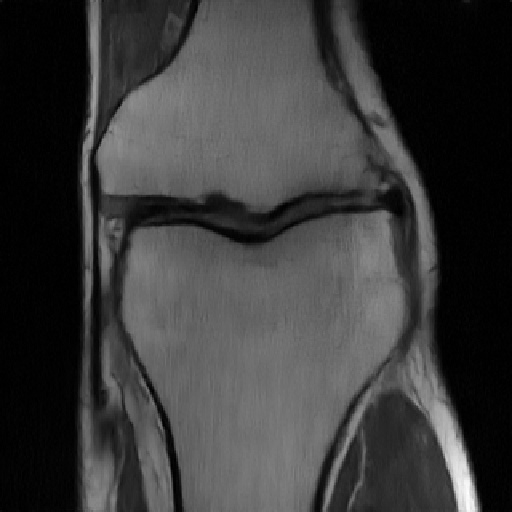} &
\includegraphics[width = 0.15\linewidth, align=c]{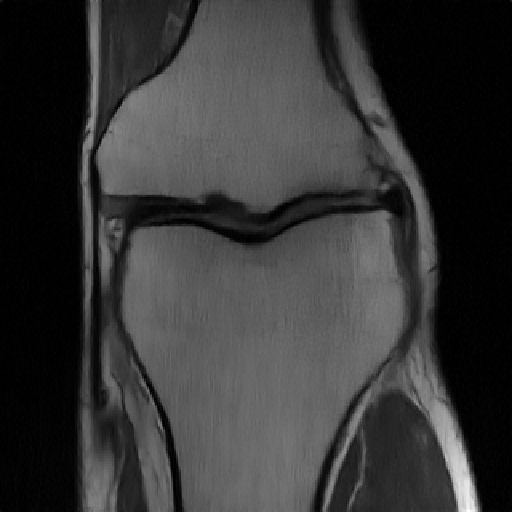} &
\includegraphics[width = 0.15\linewidth, align=c]{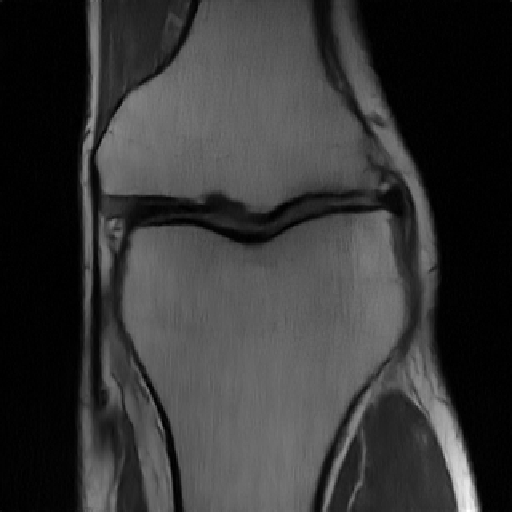} &
\includegraphics[width = 0.15\linewidth, align=c]{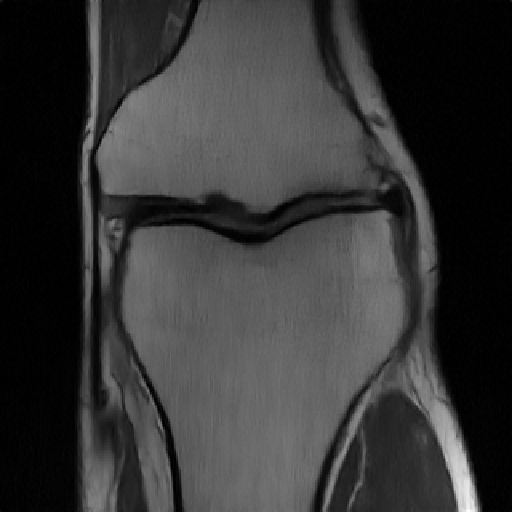} 
\end{tabular}
\caption{Sample images and reconstructions for MRI reconstruction with acceleration $4\times$ and additive Gaussian noise with $\sigma=0.01$. Each image represents the output of iterate number $K$. Below each image is the residual between iterate $K$ and the previously-visualized iterate, or in the case of $K=0$, between the input to the network and the output of the initial iterate. In this case, the algorithm stops iterating (as the relative norm between iterations drops below $10^{-3}$) at iteration 31. The ground truth may be viewed in the initial column of Figure \ref{fig:mrisamples}. Best viewed electronically.}
\label{fig:mriiterates}
\end{figure*}

\begin{figure*}[ht!]
\renewcommand*{\arraystretch}{0}
\centering
\begin{tabular}{c@{}c@{}c@{}c@{}c@{}}
K=0 & K=10 & K=20 & K=30 & K=40  \vspace{2pt} \\
 \includegraphics[width = 0.15\linewidth, align=c]{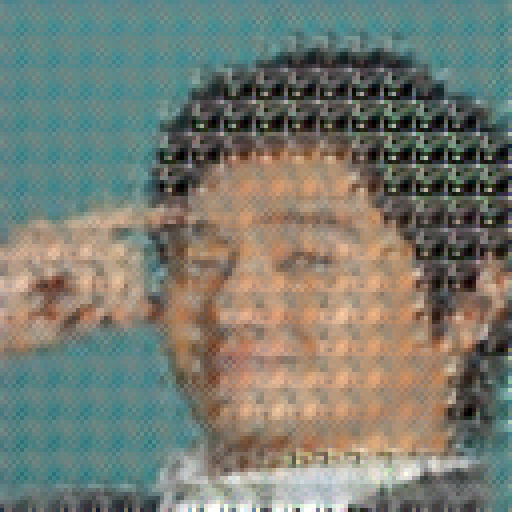} &
\includegraphics[width = 0.15\linewidth, align=c]{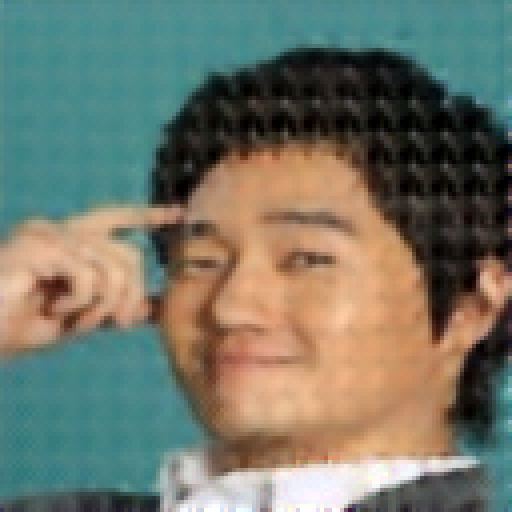} &
\includegraphics[width = 0.15\linewidth, align=c]{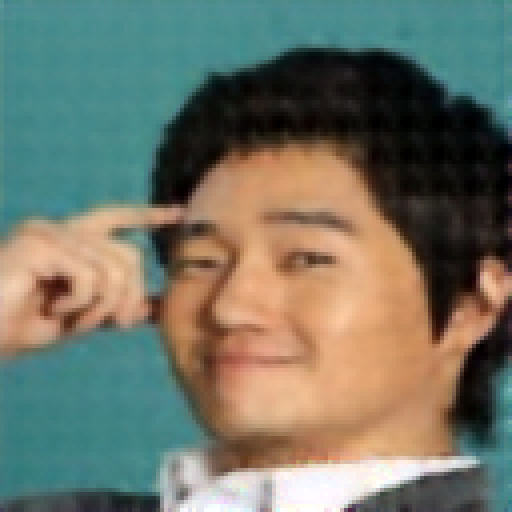} &
\includegraphics[width = 0.15\linewidth, align=c]{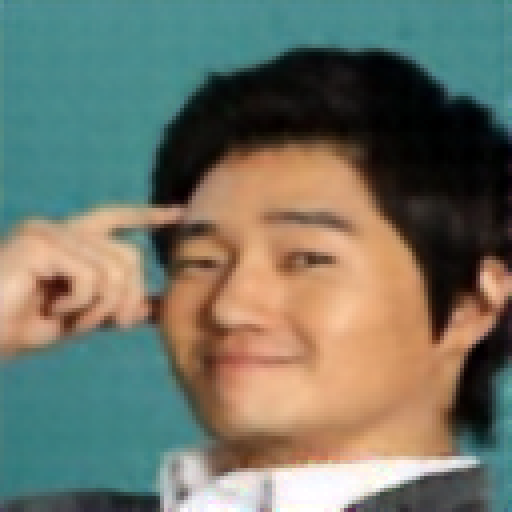} &
\includegraphics[width = 0.15\linewidth, align=c]{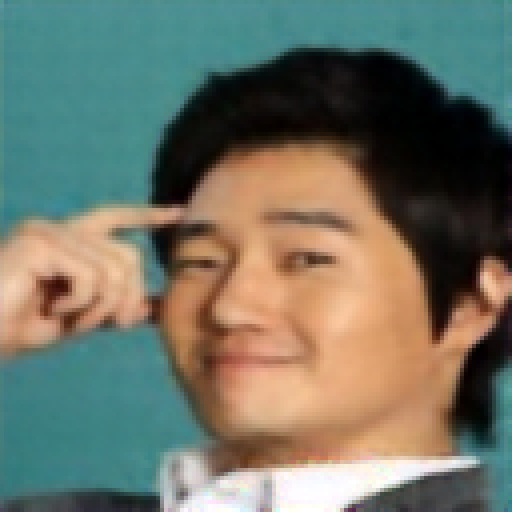} \\
\vspace{1pt} \\ 
K=50 & K=60 & K=70 & K=80 & K=90 \vspace{2pt} \\
\includegraphics[width = 0.15\linewidth, align=c]{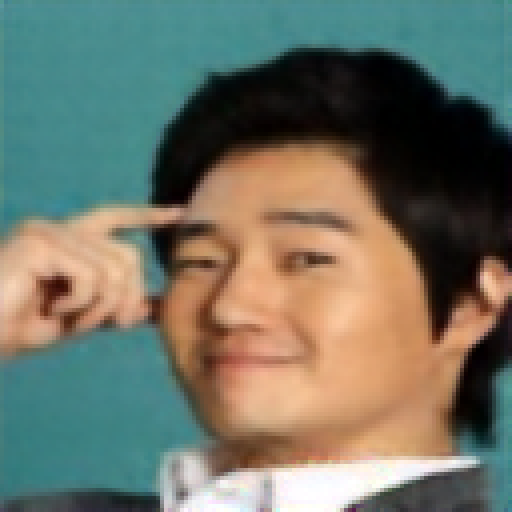} &
\includegraphics[width = 0.15\linewidth, align=c]{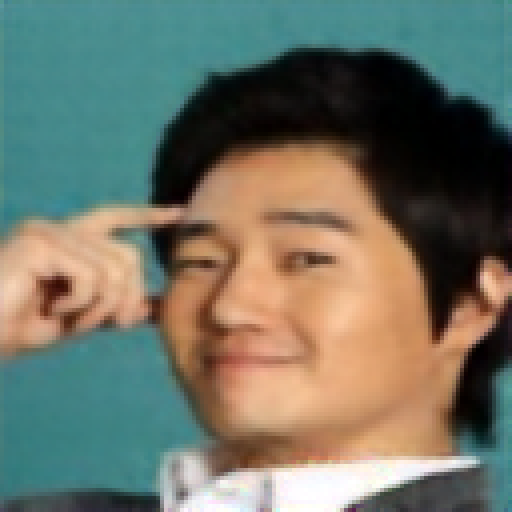} &
\includegraphics[width = 0.15\linewidth, align=c]{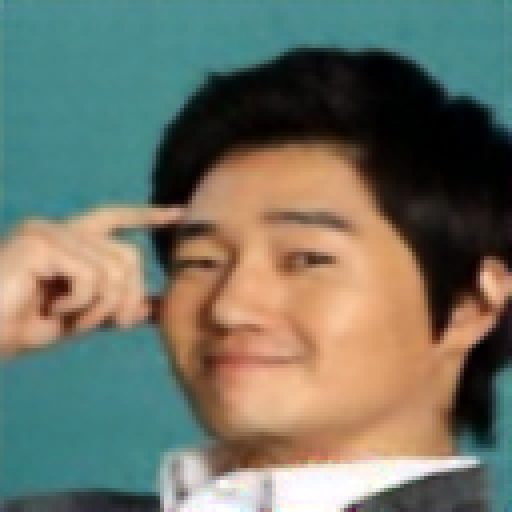} &
\includegraphics[width = 0.15\linewidth, align=c]{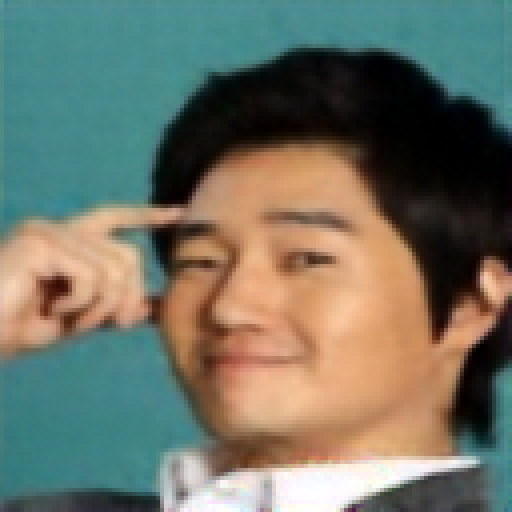} &
\includegraphics[width = 0.15\linewidth, align=c]{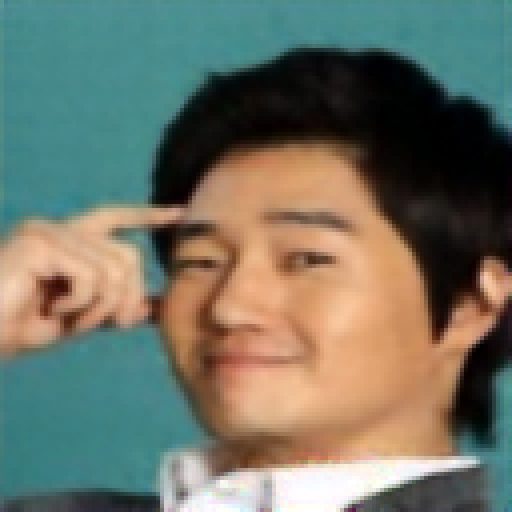}
\end{tabular}
\caption{Sample images and reconstructions from {\sc DE-Prox} reconstructions, with the forward model $4\times$ Gaussian compressed sensing with Gaussian noise with $\sigma=0.01$. Each image represents the output of iterate number $K$. The ground truth may be viewed in the final column of Figure \ref{fig:cssamples}. Best viewed electronically.}
\label{fig:csiterates}
\end{figure*}

\subsection{Further Experimental Details}

In this section we provide further details related to the experimental setup.

The input to the deblurring algorithms is the preconditioned measurement $(A^\top A + \lambda I)^{-1} A^\top y$, where $\lambda$ is set to be equal to the noise level $\sigma$. For MRI reconstruction and compressed sensing experiments, the input is instead simply $A^\top y$. The masks used in the MRI reconstruction experiments are based on a Cartesian sampling pattern, as in the standard fastMRI setting. For both $4\times$ and $8\times$, the center 4$\%$ of frequencies are fully sampled, and further frequencies are sampled according to a Gaussian distribution centered at 0 frequency with $\sigma=1$.

The compressed sensing design matrices have entries sampled and scaled so that each entry is drawn from a Gaussian distribution with variance $1/m$, where $A\in \mathcal{R}^{m\times n}$. The same design matrix is used for all learned methods.

Optimization algorithm parameters for RED, Plug-and-Play, and all Deep Equilibrium approaches are all chosen via a logarithmic grid search from $10^{-4}$ to $10^1$ with 20 elements in each dimension of the grid. All DU methods were trained for 10 iterations. All testing was done on an NVidia RTX 2080 Ti. All networks were trained on a cluster with a variety of computing resources \footnote{See: \url{https://slurm.ttic.edu/}}. Every experiment was run utilizing a single GPU-single CPU setup with less than 12 GB of GPU memory.

\end{document}